\newcommand{\s}{\sum}
\renewcommand{\p}{\prod}
\newcommand{\pn}[1]{\prod^{[#1]}}
\begin{document}

\title{Improved bounds for reduction to depth $4$ and depth $3$}
\date{\today}
\author{Sébastien Tavenas}
\institute{LIP\thanks{UMR 5668 ENS Lyon - CNRS - UCBL - INRIA,
    Université de Lyon, sebastien.tavenas@ens-lyon.fr}, École Normale Supérieure de Lyon}

\maketitle

\begin{abstract}
  Koiran~\cite{Koi12} showed that if an $n$-variate polynomial $f_n$
  of degree $d$ (with $d=n^{O(1)}$) is computed by a circuit of size
  $s$, then it is also computed by a homogeneous circuit of depth four
  and of size $2^{O(\sqrt{d}\log(n)\log(s))}$. Using this result,
  Gupta, Kamath, Kayal and Saptharishi~\cite{GKKS13} found an upper
  bound for the size of a depth three circuit computing $f_n$.

  We improve here Koiran’s bound. Indeed, we show that it is possible
  to transform an arithmetic circuit into a depth four circuit of size
  $2^{\left(O\left(\sqrt{d\log (ds)\log (n)}\right)\right)}$. Then,
  mimicking the proof in~\cite{GKKS13}, it also implies an
  $2^{\left(O\left(\sqrt{d\log (ds)\log (n)}\right)\right)}$ upper
  bound for depth three circuits.

  This new bound is not far from optimal in the sense that Gupta,
  Kamath, Kayal and Saptharishi~\cite{GKKS12} also showed a
  $2^{\Omega(\sqrt{d})}$ lower bound for the size of homogeneous depth
  four circuits such that gates at the bottom have fan-in at most
  $\sqrt{d}$. Finally, we show that this last lower bound also holds
  if the fan-in is at least $\sqrt{d}$.
\end{abstract}

\section{Introduction}

Valiant, Skyum, Berkowitz and Rackoff~\cite{VSBR83} proved that if a
size-$s$ depth-$d$ circuit computes a polynomial of degree $d$, then
this polynomial can also be computed by a circuit of depth
$O(\log(d)\log(s))$ and of size bounded by a polynomial in $s$ (this
result will be the basis for the parallelization in this paper). Some
years later, Allender, Jiao, Mahajan and Vinay~\cite{AJMV98} studied
this parallelization method and showed it could be done
uniformly. Using the proof of these results, Agrawal and Vinay proved~\cite{AV08}
that if an n-variate polynomial $f$ of degree $d=O(n)$ has a circuit
of size $2^{o(d+d\log(\frac nd))}$, then $f$ can also be computed by a
depth-four circuit ($\sum\prod\sum\prod$) of size $2^{o(d+d\log(\frac
  nd))}$. This result shows that for proving arithmetic circuit lower
bounds or black-box derandomization of identity testing, the case of
depth four arithmetic circuit is the general case in a certain sense.
 
The hypothesis of Agrawal and Vinay’s result is quite weak: they
consider circuits of size $2^{o(d+d \log( \frac{n}{d} ))}$ (we can
notice that all polynomials have a formula of size $d\binom{n+d}{d}=2^
{O(d\log(\frac{n+d}{d}))}$). But if the hypothesis is strengthened, it
is possible to get a stronger conclusion. Indeed, Koiran~\cite{Koi12}
showed that if a circuit is of size $s$, then it can be computed by a
homogeneous depth-four circuit of size
$2^{O(\sqrt{d}\log(d)\log(s))}$. For example, if the permanent family
is computed by a polynomial size circuit (i.e., of size $n^c$), then
it is computed by a depth-four circuit of size
$2^{O(\sqrt{n}\log^2(n))}$. These results appear as an interesting
approach to lower bounds: if one finds a
$2^{\omega\left(\sqrt{n}\log^2(n)\right)}$ lower bound on the size of
$\s\pn{O\left(\sqrt{n}\right)}\s\pn{\sqrt{n}}$ circuits computing the
permanent, then it will imply that there are no polynomial size
circuits for the permanent. Moreover it could be easier to find lower
bounds on the size of these particular circuits than for the general
circuits. Indeed, although no superpolynomial lower bound is known for
general circuits, Gupta, Kamath, Kayal and Saptharishi~\cite{GKKS12}
get a nearly optimal lower bound for particular depth-4 circuits for
the permanent.  More precisely, they showed that if a homogeneous
$\sum \prod \sum \prod$ circuit where the bottom fan-in is bounded by
$t$ computes the permanent of a matrix of size $n\times n$, then its
size is $2^{\Omega(\frac n t)}$. In particular, a
$\s\pn{O\left(\sqrt{n}\right)}\s\pn{\sqrt{n}}$ circuit computing the
permanent is of size $2^ {\Omega\left(\sqrt{n}\right)}$. The following
year, the same authors~\cite{GKKS13} improve the upper bound by
transforming $n$-variate circuits of size $s$ and depth $d$
($=n^{O(1)}$) into depth-$3$ circuits of size $2^{\left(O(\sqrt{d\log
      s\log n\log d})\right)}$, moreover if the input is a branching
program (and not a circuit), the upper bound becomes
$2^{\left(O(\sqrt{d\log s\log n})\right)}$. In particular, this result
gives a depth-$3$ circuit of size $2^{O\left(\sqrt{n}\log n\right)}$
computing the determinant of a matrix $n\times n$. Nevertheless, this
result is not comparable to the depth-$4$ reductions since the
depth-$3$ circuit they get is not homogeneous, and uses gates
computing polynomials of very high degree. Very recently, Fournier,
Limaye, Malod and Srinivasan~\cite{FLMS13} showed an
$2^{\Omega(\sqrt{d/t}\log n)}$-lower bound for the size of the
$\s\p\s\p$ circuits, with bottom fan-in bounded by $t$, which compute
the iterated matrix multiplication.

In this paper we improve Koiran's bound. We show that a circuit of
size $s$ can be parallelized homogeneously in depth $4$ and in size
$2^{\left(O\left(\sqrt{d\log(ds)\log(n)}\right)\right)}$ such that the
fan-in of each multiplication gate is bounded by
$O\left(\sqrt{d\frac{\log ds}{\log n}}\right)$. We can notice that as
$n\leq s$, the result implies Koiran's bound and is generally better
(in the case where $d,s=n^{\Theta(1)}$, Koiran's bound is
$2^{O(\sqrt{n}\log^2 n)}$ while the new bound is $2^{O(\sqrt{n}\log
  n)}$). It implies that a $2^{\omega\left(\sqrt{n}\log(n)\right)}$
lower bound for depth-$4$ circuits computing the permanent gives a
super-polynomial lower bound for general circuits computing the
permanent. Moreover, using this result in Gupta, Kamath, Kayal and
Saptharishi's proof instead of Koiran's result slightly improves the
depth-$3$ upper bound. An $n$-variate circuit of size $s$ and depth
$d$ is computed by a depth-$3$ circuit of size
$2^{\left(O(\sqrt{d\log(ds)\log n})\right)}$. So, we get the same
bound for the reduction at depth $3$ starting from an arithmetic
circuit as from an arithmetic branching program. Finally in
Section~\ref{Sec_lowerbound}, we show, by a counting argument, that if
a homogeneous $\sum \prod \sum \prod$ circuit where the bottom fan-in
is lower-bounded by $t$ computes the permanent (or the determinant) of
a matrix of size $n\times n$, then its size is $2^{\Omega(t\log n)}$.

\section{Arithmetic Circuits}

We give here a brief introduction to the theory of arithmetic circuits. The
reader can find more detailed information
in~\cite{Gat87,Bur00,SY10,CKW11}. In this theory, we measure the
complexity of polynomial functions using arithmetic circuits.

\begin{definition}
  An arithmetic circuit is a finite acyclic directed graph with
  vertices of in-degree $0$ or more and exactly one vertex of
  out-degree $0$. Vertices of in-degree $0$ are called inputs and
  labeled by a constant or a variable. The other vertices are labeled
  by $\times$ or $+$ (or sometimes by $\odot$ in this paper) and
  called computation gates (the in-degree of these gates will be also
  called the fan-in). The vertex of out-degree $0$ is called the
  output. The vertices of a circuit are commonly called gates and its
  edges arrows. Finally, we call a formula, an arithmetic circuit such
  that the underlying graph is a tree.
\end{definition}

A $\odot$-gate corresponds to a multiplication-by-a-scalar
gate. The fan-in of such a gate will be always $2$ and at least one of
its inputs corresponds to a constant (we will give a syntactic
restriction just after the next definition).

Each gate of a circuit computes a polynomial (defined by
induction). The polynomial computed by a circuit corresponds to the
polynomial computed by the output of this circuit. For a gate
$\alpha$, we denote $[\alpha]$ the polynomial computed by this gate. In fact, for some proofs, we will use
circuits with several outputs (each one corresponds to an out-degree
$0$ gate).  

\begin{definition}
  The size of a circuit is its number of gates. The depth is the
  maximal length of a directed path from an input to an output. The
  degree of a gate is defined recursively: constant inputs labelled by
  $0$ are of degree $-\infty$, other constant inputs are of degree
  $0$, any variable input is of degree 1, the degree of a $+$-gate is
  the maximum of the incoming degrees and the degree of a
  $\times$-gate (or a $\odot$-gate) is the sum of the incoming
  degrees.

  A circuit is called {\it homogeneous} is for each $+$-gate $\alpha$,
  all the inputs of $\alpha$ have same degree.
\end{definition}

We can now put a restriction for the $\odot$-gates. For each one of
these gates, one of its two children has to be of degree $0$.

\begin{remark}\label{rem_zero}
  In the following, we will assume that the computation gates will
  never compute the zero polynomial. If it is the case, it is
  sufficient to replace this gate by an input gate labelled by the
  constant $0$.
\end{remark}

For a given circuit we will consider graphs called parse trees. A
parse tree corresponds, in the spirit, to the computation of one
particular monomial.

\begin{definition}
  The set of parse trees of a circuit $C$ is defined by induction on
  its size:
  \begin{itemize}
  \item If $C$ is of size $1$ it has only one parse tree, itself.
  \item If the output gate $o$ of $C$ is a $+$-gate whose inputs are
    the gates $\alpha_1,\ldots,\alpha_k$, then the parse trees of $C$
    are obtained by choosing, for an arbitrary $i\leq k$, a parse tree
    of the sub-circuit rooted in $\alpha_i$ and the arrow from
    $\alpha_i$ to the output $o$.
  \item If the output gate $o$ of $C$ is a $\times$-gate or an
    $\odot$-gate whose inputs are the gates
    $\alpha_1,\ldots,\alpha_k$, the parse trees of $C$ are obtained by
    taking for each $i\leq k$, one disjoint copie of a parse tree of
    the sub-circuit rooted in $\alpha_i$, and the arrows from all
    $\alpha_i$ to the output $o$.
  \end{itemize}
\end{definition}

For example, the following circuit
\begin{center}
  \tikzstyle{gat}=[thick,draw,circle,font=\sffamily\bfseries]
  \begin{tikzpicture}[shorten >=1pt]
    \node[gat] (0) at (0,0) {x};
    \node[gat] (1) at (1,0) {y};
    \node[gat] (2) at (2,-0.5) {z};
    \node[gat] (3) at (0.5,-1) {$+$}
    edge[thick,<-] (0)
    edge[thick,<-] (1);
    \node[gat] (4) at (1.8,-2) {$+$}
    edge[thick,<-] (2)
    edge[thick,<-] (3);
    \node[gat] (5) at (0.5,-3) {$\times$}
    edge[thick,<-] (3)
    edge[thick,<-] (4);
  \end{tikzpicture}
\end{center}
has six parse trees.
\begin{center}
  \tikzstyle{gat}=[thick,draw,circle,font=\sffamily\bfseries]
  \begin{tikzpicture}[shorten >=1pt]
    \node[gat] (0) at (0.2,0) {x};
    \node[gat] (2) at (1.4,-0.5) {z};
    \node[gat] (3) at (0.5,-1) {$+$}
    edge[thick,<-] (0);
    \node[gat] (4) at (1.2,-2) {$+$}
    edge[thick,<-] (2);
    \node[gat] (5) at (0.5,-3) {$\times$}
    edge[thick,<-] (3)
    edge[thick,<-] (4);

    \node[gat] (11) at (2.6,0) {y};
    \node[gat] (12) at (3.3,-0.5) {z};
    \node[gat] (13) at (2.1,-1) {$+$}
    edge[thick,<-] (11);
    \node[gat] (14) at (2.9,-2) {$+$}
    edge[thick,<-] (12);
    \node[gat] (15) at (2.1,-3) {$\times$}
    edge[thick,<-] (13)
    edge[thick,<-] (14);

    \node[gat] (20) at (3.9,0) {x};
    \node[gat] (21) at (4.8,0) {x};
    \node[gat] (23) at (4.3,-1) {$+$}
    edge[thick,<-] (20);
    \node[gat] (24) at (5.3,-1) {$+$}
    edge[thick,<-] (21);
    \node[gat] (26) at (5.6,-2) {$+$}
    edge[thick,<-] (24);
    \node[gat] (25) at (4.3,-3) {$\times$}
    edge[thick,<-] (23)
    edge[thick,<-] (26);

    \node[gat] (30) at (5.7,0) {x};
    \node[gat] (31) at (7.6,0) {y};
    \node[gat] (32) at (6.2,-1) {$+$}
    edge[thick,<-] (30);
    \node[gat] (33) at (7.1,-1) {$+$}
    edge[thick,<-] (31);
    \node[gat] (34) at (7.4,-2) {$+$}
    edge[thick,<-] (33);
    \node[gat] (35) at (6.2,-3) {$\times$}
    edge[thick,<-] (32)
    edge[thick,<-] (34);

    \node[gat] (40) at (8.4,0) {y};
    \node[gat] (41) at (9.2,0) {x};
    \node[gat] (42) at (8.1,-1) {$+$}
    edge[thick,<-] (40);
    \node[gat] (43) at (9.4,-1) {$+$}
    edge[thick,<-] (41);
    \node[gat] (44) at (9.7,-2) {$+$}
    edge[thick,<-] (43);
    \node[gat] (45) at (8.1,-3) {$\times$}
    edge[thick,<-] (42)
    edge[thick,<-] (44);

    \node[gat] (50) at (10.8,0) {y};
    \node[gat] (51) at (11.7,0) {y};
    \node[gat] (52) at (10.3,-1) {$+$}
    edge[thick,<-] (50);
    \node[gat] (53) at (11.4,-1) {$+$}
    edge[thick,<-] (51);
    \node[gat] (54) at (10.9,-2) {$+$}
    edge[thick,<-] (53);
    \node[gat] (55) at (10.4,-3) {$\times$}
    edge[thick,<-] (52)
    edge[thick,<-] (54);
  \end{tikzpicture}
\end{center}

We can notice that the size of a parse tree can be exponentially
larger that the one of the original circuit. It will not be a problem
in this paper. However, it is possible to avoid this increase  
using multiplicatively disjoint circuits as it is done in~\cite{MP08}.
    
At each parse tree, we can associate the monomial which corresponds to the
product of the leaves.

The next lemma is proved in~\cite{MP08}.

\begin{lemma}\label{lem_1.11} 
  A polynomial $f$ computed by a circuit $C$ equals
  the sum of the monomials of the parse trees:
  \begin{align*}
    f=\s_{\substack{T\textrm{ parse} \\ \textrm{tree}}} m(T)
  \end{align*}
  where $m(T)$ is the monomial associated to the tree $T$.
\end{lemma}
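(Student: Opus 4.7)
The plan is to proceed by induction on the size (number of gates) of $C$, mirroring the inductive definition of parse trees. The base case is immediate: if $C$ consists of a single input gate labelled by $x$ (a variable or a constant), then $[C] = x$ and the unique parse tree of $C$ is itself, with $m(C) = x$, so the identity $f = \sum_T m(T)$ holds trivially.

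For the inductive step, let $o$ be the output gate with children $\alpha_1, \ldots, \alpha_k$, and write $C_i$ for the sub-circuit rooted at $\alpha_i$. Each $C_i$ has strictly smaller size, so by the induction hypothesis $[\alpha_i] = \sum_{T_i} m(T_i)$, with $T_i$ ranging over parse trees of $C_i$. If $o$ is a $+$-gate, then $f = \sum_{i=1}^k [\alpha_i]$ by definition of the circuit, while by definition of parse trees every parse tree of $C$ is obtained by picking a single index $i$ and a parse tree $T_i$ of $C_i$ (the attached arrow $\alpha_i \to o$ and the gate $o$ contribute no leaves). Partitioning the sum over parse trees of $C$ according to the chosen child index gives $\sum_T m(T) = \sum_{i=1}^k \sum_{T_i} m(T_i) = \sum_{i=1}^k [\alpha_i] = f$, as required.

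The multiplicative case is where a little care is needed. If $o$ is a $\times$-gate (or a $\odot$-gate, handled identically), then $f = \prod_{i=1}^k [\alpha_i]$, and a parse tree of $C$ is precisely a tuple $(T_1, \ldots, T_k)$ of parse trees of the $C_i$, attached to $o$ through disjoint copies. The key observation is that the leaves of such a parse tree are the disjoint union of the leaves of the $T_i$, and therefore $m(T) = \prod_{i=1}^k m(T_i)$. Applying distributivity of multiplication over addition to the induction hypothesis then yields $\sum_T m(T) = \sum_{T_1, \ldots, T_k} \prod_{i=1}^k m(T_i) = \prod_{i=1}^k \sum_{T_i} m(T_i) = \prod_{i=1}^k [\alpha_i] = f$. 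I do not expect any real obstacle beyond this bookkeeping; the only subtle point is to recall that at a $\times$-gate the sub-parse-trees are taken as \emph{disjoint} copies, which is exactly what makes the monomials multiply and the identity expand correctly via distributivity rather than be corrupted by collisions between the sub-computations.
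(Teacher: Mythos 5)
Your proof is correct. One remark on context: the paper does not actually supply a proof of this lemma --- it simply cites Malod and Portier (reference [MP08]) for it --- so there is no ``paper proof'' to compare against line by line. Your induction on circuit size, mirroring the recursive definition of parse trees and splitting on whether the output is a $+$-gate or a $\times$-/$\odot$-gate, is the standard argument and is exactly what one finds in [MP08]. You also correctly identify the one point that requires care: at a multiplication gate the sub-parse-trees are taken as \emph{disjoint} copies even though the sub-circuits rooted at the children may share gates in the underlying DAG, and this is precisely what makes $m(T)$ factor as $\prod_i m(T_i)$ and lets distributivity recover $\prod_i [\alpha_i]$. The sub-circuits $C_i$ are genuinely smaller than $C$ (the output gate $o$ lies in none of them by acyclicity) and each is a legitimate circuit with unique output $\alpha_i$, so the induction hypothesis applies. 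No gaps.
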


We will use some convenient notations which are defined in~\cite{GKKS13}. A
depth-$4$ circuit such that gates are multiplication gates at level one and
three and addition gates at levels two and four are denoted
$\sum\prod\sum\prod$ circuits. Furthermore, a
$\sum\prod^{[\alpha]}\sum\prod^{[\beta]}$ circuit is a $\sum\prod\sum\prod$
circuit such that the fan-in of the multiplication gates at level $3$
is bounded by $\alpha$, and the fan-in of the multiplication gates at
level $1$ is bounded by $\beta$. For example, a 
$\sum\prod^{[\alpha]}\sum\prod^{[\beta]}$ circuit computes
a polynomial of the form:
\begin{align*}
  \sum_{i=1}^t \prod_{j=1}^{a_i} \sum_{k=1}^{u_{i,j}} \prod_{l=1}^{b_{i,j,k}} x_{i,j,k,l}
\end{align*}
where $a_i\leq \alpha$, $b_{i,j,k}\leq \beta$.

Finally, in the following, we want to transform some circuits. The
underlying ring will be the same for the new circuit. Moreover, it can
be noticed that the following results (except for
Proposition~\ref{pro_1} and Corollary~\ref{cor_1}) hold for any
commutative ring.

\section{Upper bounds}

Here, we state the main theorem of this paper. 

\begin{theorem}\label{Thm_main}
  Let $f$ be an $n$-variate polynomial computed by a circuit of size
  $s$ and of degree $d$.  Then $f$ is computed by a
  $\sum\pn{O(\alpha)}\sum\pn{\beta}$ circuit $C$ of size
  $2^{O\left(\sqrt{d\log(ds)\log n}\right)}$ where
  $\alpha=\sqrt{d\frac{\log n}{\log ds}}$ and $\beta=\sqrt{d\frac{\log
      ds}{\log n}}$.  Furthermore, if $f$ is homogeneous, it will be
  also the case for $C$.
\end{theorem}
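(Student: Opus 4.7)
My plan follows the parse-tree and depth-reduction strategy of Valiant--Skyum--Berkowitz--Rackoff~\cite{VSBR83} and Koiran~\cite{Koi12}, but applies the decomposition to a balanced form of the circuit in a single pass; this single-pass application is precisely what saves the extra $\log d$ factor appearing in~\cite{Koi12}.

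First I would invoke a VSBR-style parallelization to turn the given circuit into an equivalent circuit $C'$ of size $\mathrm{poly}(s,d)$ (so $\log|C'|=O(\log(sd))$) in which every $\times$-gate has fan-in $2$ and each of its two children has degree at most half that of the parent; standard homogenization keeps $C'$ homogeneous whenever $f$ is. I would then unfold the output of $C'$ by pushing all $+$-gates to the outside and leaving $\times$-gates as products, stopping the unfolding at every gate whose degree has dropped to at most $\beta$. This produces an identity
\begin{align*}
  f \;=\; \sum_{T}\;\prod_{g\in \mathrm{fr}(T)} [g],
\end{align*}
where $T$ ranges over the resulting ``top skeletons''---trees of $+$-choices and $\times$-branchings in $C'$ cut off at the degree-$\beta$ frontier---and $\mathrm{fr}(T)$ is the multiset of frontier gates of $T$. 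Because $C'$ is homogeneous and each multiplication halves the degree, every frontier gate has degree in $(\beta/2,\beta]$, so $|\mathrm{fr}(T)|\le 2d/\beta = O(\alpha)$ uniformly in $T$.

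The final step would be to realize this identity by a $\s\pn{O(\alpha)}\s\pn{\beta}$ circuit. The number of top skeletons is at most $|C'|^{O(\alpha)}\le 2^{O(\alpha\log(sd))}$, since each skeleton is a tree with $O(\alpha)$ leaves whose internal nodes and frontier leaves are labeled by gates of $C'$. Each frontier polynomial $[g]$ has degree at most $\beta$ in $n$ variables and can be expanded as a sum of at most $\binom{n+\beta}{\beta}\le 2^{O(\beta\log n)}$ products of at most $\beta$ variables. Choosing $\alpha=\sqrt{d\log n/\log(sd)}$ and $\beta=\sqrt{d\log(sd)/\log n}$ gives $\alpha\beta=d$ and $\alpha\log(sd)=\beta\log n$, so both contributions are of order $2^{O(\sqrt{d\log(sd)\log n})}$, which is the claimed bound; homogeneity of $C'$ forces every top summand to have total degree $d$, so the homogeneity of $f$ transfers to the final circuit.

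The main obstacle I anticipate is making the unfolding step precise: one has to verify that $+$-gates encountered inside a skeleton multiply the number of skeletons but never enlarge any single product, and that the degree-halving property of $C'$ genuinely forces the frontier of every skeleton to have only $O(d/\beta)$ gates. This is where a VSBR-style balanced circuit is essential: without it the number of frontier gates could be as large as $\Omega(d)$ and the extra $\log d$ factor from Koiran's iterative version would reappear in the final exponent.
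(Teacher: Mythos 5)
Your overall strategy is the same as the paper's (homogenize, VSBR-balance the circuit, cut at degree $\approx\beta$, expand both halves into depth-two pieces, multiply the counts, and choose $\alpha,\beta$ to balance the two contributions), and your parameter choice is correct. However, the step where you bound the number of frontier gates per skeleton contains a genuine gap. You assert that ``every frontier gate has degree in $(\beta/2,\beta]$'' because the balanced circuit has fan-in-$2$ $\times$-gates with each child of degree at most half the parent. That premise is already inconsistent: if a $\times$-gate has two children whose degrees sum to $D$ and each is at most $D/2$, then both must equal $D/2$, which is impossible when $D$ is odd. The actual VSBR-style balanced form one can obtain (Proposition~\ref{Prop_logdepth} here) allows $\times$-gates of fan-in up to $5$ with each factor of degree at most half. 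Under that form, a $\times$-gate of degree $D>\beta$ can perfectly well have a child of degree $1$, and that child ends up on the frontier. So the frontier may contain many low-degree gates, and $|\mathrm{fr}(T)|\le 2d/\beta$ does not follow from the degree-halving property; your argument collapses precisely at the claim that drives the whole count.

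The paper's Lemma~\ref{Lem_main} closes this gap with a different counting argument that does not assume any degree lower bound on frontier gates. Fix a parse tree $T$ and look at the $\times$-gates in its ``top part'' (degree $\ge d/a$). Partition them into $\mathcal{G}_0$ (all children are frontier leaves), $\mathcal{G}_1$ (exactly one non-leaf child), and $\mathcal{G}_2$ (at least two non-leaf children). The gates in $\mathcal{G}_0$ are pairwise on disjoint paths and each has degree $\ge d/a$, so $|\mathcal{G}_0|\le a$. For a gate $\alpha\in\mathcal{G}_1$ with unique non-leaf child $\gamma$, the $\times$-balance gives $\deg(\alpha)\ge 2\deg(\gamma)\ge 2d/a$, so the leaf children of $\alpha$ have total degree $\ge d/a$; since these leaf sets are disjoint across different $\alpha\in\mathcal{G}_1$, again $|\mathcal{G}_1|\le a$. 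Finally, in a tree the number of branching nodes is at most the number of leaves of that subtree, so $|\mathcal{G}_2|\le|\mathcal{G}_0|\le a$. Since each $\times$-gate has fan-in $\le 5$, this gives at most $15a=O(\alpha)$ frontier leaves per parse tree, which is exactly the bound you wanted but could not justify. You would need this kind of structural argument (or an equivalent one) to make your unfolding step rigorous; the degree-halving heuristic alone is not enough.
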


The previous theorem can be directly applied for the permanent.

\begin{theorem}
  If the $n\times n$ permanent is computed by a circuit of size
  polynomial in $n$, then it is also computed by a
  $\sum\pn{O(\sqrt{n})}\s\pn{O(\sqrt{n})}$ circuit of size
  $2^{O\left(\sqrt{n}\log(n)\right)}$.
\end{theorem}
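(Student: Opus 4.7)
The plan is to derive this directly from Theorem~\ref{Thm_main} by substituting the parameters appropriate to the permanent. The $n \times n$ permanent is a homogeneous polynomial of degree $d = n$ in $N = n^2$ variables (one per matrix entry), and by hypothesis it is computed by a circuit of size $s = n^{O(1)}$. It therefore suffices to evaluate the three quantities $\alpha$, $\beta$, and the overall size bound that Theorem~\ref{Thm_main} produces, with $d$, $s$, and the number of variables specialized to these values.

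For the fan-ins, since $\log(ds) = \log(n \cdot n^{O(1)}) = \Theta(\log n)$ and $\log N = 2\log n$, both quantities
\begin{align*}
  \alpha = \sqrt{d\,\tfrac{\log N}{\log(ds)}} = \sqrt{n\cdot \tfrac{2\log n}{\Theta(\log n)}}, \qquad
  \beta = \sqrt{d\,\tfrac{\log(ds)}{\log N}} = \sqrt{n\cdot \tfrac{\Theta(\log n)}{2\log n}}
\end{align*}
are $O(\sqrt{n})$. Similarly, the size bound becomes
\begin{align*}
  2^{O(\sqrt{d\,\log(ds)\,\log N})} \;=\; 2^{O(\sqrt{n\cdot \Theta(\log n)\cdot 2\log n})} \;=\; 2^{O(\sqrt{n}\,\log n)},
\end{align*}
which is the claimed estimate. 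Finally, because the permanent is homogeneous, the ``furthermore'' clause of Theorem~\ref{Thm_main} guarantees that the resulting $\sum\pn{O(\sqrt{n})}\sum\pn{O(\sqrt{n})}$ circuit can be taken to be homogeneous as well.

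There is essentially no obstacle here: the statement is a transparent specialization of Theorem~\ref{Thm_main} whose only content is the arithmetic above. The one small point to keep in mind is bookkeeping for the variable count, namely using $N = n^2$ rather than $n$ in the role of the number of variables; this only changes the relevant logarithm by a factor of $2$ and so does not affect any of the $O(\cdot)$ bounds.
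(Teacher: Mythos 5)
Your proposal is correct and follows exactly the route the paper intends: the paper itself states only that ``the previous theorem can be directly applied for the permanent,'' and your computation supplies the implicit substitution of $d=n$, $s=n^{O(1)}$, and $N=n^2$ variables into Theorem~\ref{Thm_main}, with the correct observation that $\log(ds)=\Theta(\log n)$ and $\log N = 2\log n$ yields $\alpha,\beta=\Theta(\sqrt{n})$ and a $2^{O(\sqrt{n}\log n)}$ size bound.
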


In their paper~\cite{GKKS13}, Gupta, Kamath, Kayal and Saptharishi
used the previous $2^{\sqrt{d}\log^2(s)}$ bound~\cite{Koi12} for
parallelizing at depth $3$. They showed that:
\begin{proposition}[Theorem 1.1~in~\cite{GKKS13}]\label{pro_1}
  Let $f(x)\in\mathbb{Q}[x_1,\ldots,x_n]$ be an $n$-variate polynomial
  of degree $d = n^{O(1)}$ computed by an arithmetic circuit of size
  $s$. Then it can also be computed by a $\sum\prod\sum$ circuit of
  size $2^{O( \sqrt{d \log n \log s \log d})}$ with coefficients
  coming from $\mathbb{Q}$.
\end{proposition}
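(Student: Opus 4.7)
The natural plan is to pipeline Koiran's depth-four reduction with Fischer's identity
\begin{equation*}
  \prod_{j=1}^{m} y_j \;=\; \frac{1}{m!}\sum_{S \subseteq [m]} (-1)^{m-|S|} \Bigl(\sum_{j \in S} y_j\Bigr)^{m},
\end{equation*}
which is valid over $\mathbb{Q}$, to collapse a depth-four circuit down to depth three.

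First, I would apply Koiran's reduction to $f$, with top product fan-in $t$ and bottom product fan-in $r$ satisfying $rt = \Theta(d)$, to be fixed at the end. This yields a homogeneous $\sum \prod^{[t]} \sum \prod^{[r]}$ circuit $C'$ of some size $S_4$ depending on $(r,t,s,d)$; Koiran's construction is flexible enough to admit a range of such $(r, t)$, not only the balanced $r = t \sim \sqrt{d}$.

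Second, I would collapse each block $\prod_{j=1}^{t} Q_{ij}$ of $C'$ (with $\deg(Q_{ij}) \le r$) to depth three. By Fischer's identity, the block equals a sum of $2^{t}$ terms $R_{S}^{t}$ where $R_{S} := \sum_{j \in S} Q_{ij}$ has degree at most $r$. Expanding $R_{S}^{t}$ via the multinomial theorem on the at most $\binom{n+r}{r}$ monomials of $R_{S}$ produces a sum of products of at most $rt$ variables, that is, a $\sum \prod \sum$ expression. Gluing these through the outer addition of $C'$ yields a $\sum \prod \sum$ circuit for $f$.

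Third, I would balance $(r, t)$ to minimize the final size, which is of order $S_4 \cdot 2^{t} \cdot \binom{n+r}{r}^{t}$ up to polynomial factors. The optimization trades off Koiran's size $S_4$ against the conversion blow-up $\binom{n+r}{r}^{t}$, under the constraint $rt \geq d$, and for the appropriate choice of $(r, t)$ yields the stated bound $2^{O(\sqrt{d \log n \log s \log d})}$.

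The main obstacle is precisely this last optimization. The symmetric choice $r = t \sim \sqrt{d}$ produces a blow-up of order $n^{d}$ in the multinomial step, which is far too large; the improvement exploits the flexibility in Koiran's construction to use unbalanced parameters, keeping $r$ small enough that $\binom{n+r}{r}^{t}$ remains under control while managing the corresponding growth in $S_4$.
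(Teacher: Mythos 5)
Your plan correctly identifies the first step (Koiran's depth-four reduction) and correctly reaches for Fischer's identity, but the conversion from depth four to depth three in your second and third steps does not work, and the unbalancing of $(r,t)$ cannot save it. After Fischer's identity gives you $R_S^t$ with $\deg R_S \leq r$, expanding $R_S^t$ by the multinomial theorem over the monomials of $R_S$ produces roughly $\binom{n+r}{r}^t \approx n^{rt}$ terms. Since the outer product fan-in $t$ and the inner degree bound $r$ must satisfy $rt \geq d$ (otherwise the depth-four circuit cannot even represent a degree-$d$ polynomial), this count is always at least $n^{\Omega(d)} = 2^{\Omega(d\log n)}$, no matter how you trade $r$ against $t$. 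That is exponentially larger than the target $2^{O(\sqrt{d\log n\log s\log d})}$, so the "balance $(r,t)$" step you defer to at the end has no solution. Moreover, the result of a full multinomial expansion is a $\sum\prod$ of monomials, not a genuine $\sum\prod\sum$, so even structurally the second step does not produce the claimed form without further work.

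The actual GKKS13 argument, as the surrounding text of this paper sketches, has three parts, and you are missing two of them. After the depth-four reduction, Fischer's identity is applied twice, once to the level-$3$ products and once to the level-$1$ products, yielding a depth-five circuit built from sum gates and exponentiation (powering) gates, of the form $\sum\wedge\sum\wedge\sum$. The crucial third part is then a separate lemma that converts a $\wedge\sum\wedge\sum$ subcircuit (a power of a sum of powers of linear forms) into a small $\sum\prod\sum$ circuit over $\mathbb{Q}$; this uses a duality or interpolation argument that is genuinely distinct from multinomial expansion, and it is precisely where the restriction to $\mathbb{Q}$ (beyond the trivial use in Fischer's identity) is essential. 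Without this lemma there is no way to avoid the $n^{\Theta(d)}$ blow-up, so the gap in your proposal is not a matter of parameter tuning but of a missing key idea.
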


In fact, their proof is divided into three parts. First they transform
circuits into depth-$4$ circuits, then they transform depth-$4$
circuits into depth-$5$ circuits using only sum and exponentiation
gates. And finally they transform these last circuits into depth-$3$
circuits.  Using Theorem~\ref{Thm_main} instead of Theorem~4.1 in
their paper improves the first part of their proof. That implies a
small improvement of Theorem~1.1 in~\cite{GKKS13}:

\begin{corollary}\label{cor_1}
  Let $f(x)\in\mathbb{Q}[x_1,\ldots,x_n]$ be an $n$-variate polynomial
  of degree $d=n^{O(1)}$ computed by an arithmetic circuit of size
  $s$. Then it can also be computed by a $\sum\prod\sum$ circuit of
  size $2^{O(\sqrt{d\log n\log s})}$ with coefficients coming from
  $\mathbb{Q}$.
\end{corollary}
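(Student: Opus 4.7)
The plan is to mimic the three-stage construction of \cite{GKKS13} that proves Proposition~\ref{pro_1}, feeding Theorem~\ref{Thm_main} in place of Koiran's depth-$4$ reduction at the first stage. The subsequent two stages (from depth-$4$ to a depth-$5$ circuit using addition and powering gates, and from that depth-$5$ circuit down to depth-$3$) will be used verbatim; they do not introduce any $\sqrt{\log d}$ factor, so the improvement of Theorem~\ref{Thm_main} over Koiran's bound will propagate directly to the final depth-$3$ size.

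Concretely, I would first apply Theorem~\ref{Thm_main} to the given circuit to obtain a homogeneous $\sum\pn{O(\alpha)}\sum\pn{\beta}$ circuit $C_1$ of size $S_1 = 2^{O(\sqrt{d\log(ds)\log n})}$, with $\alpha = \sqrt{d\log n/\log(ds)}$, $\beta = \sqrt{d\log(ds)/\log n}$, and $\alpha\beta = d$. Since $d = n^{O(1)}$, we have $\log(ds) = \Theta(\log n + \log s)$, so this rewrites as $S_1 = 2^{O(\sqrt{d\log s\log n})}$.

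Next I would run Stages~2 and~3 of \cite{GKKS13}. Stage~2 rewrites each bottom product of $\beta$ linear forms as a $\mathbb{Q}$-linear combination of polynomially (in $\beta$) many powers of linear forms, via an interpolation identity valid in characteristic zero (whence the restriction to $\mathbb{Q}$); this converts $C_1$ into a depth-$5$ circuit whose gates are only sums and powerings. Stage~3 then rewrites each top product of sums of powers of linear forms as a single sum of powers of linear combinations, via a Newton-type identity, yielding a depth-$3$ $\sum\prod\sum$ circuit. Each of these stages multiplies the size only by a factor polynomial in $\alpha$ and $\beta$, hence by $2^{O(\log d)} = 2^{O(\log n)}$, which is absorbed in $S_1$.

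The main thing to verify, and essentially the only nontrivial point, is that Stages~2 and~3 of \cite{GKKS13} rely only on the product constraint $\alpha\beta = O(d)$ and on the size of the depth-$4$ starting point, rather than on the symmetric choice $\alpha = \beta = \Theta(\sqrt d)$ implicit in Koiran's formulation. A careful reread of their argument confirms that the two fan-ins enter only through $\alpha\beta$, so the unequal split produced by Theorem~\ref{Thm_main} plugs in seamlessly and the final bound is $2^{O(\sqrt{d\log s\log n})}$ as claimed.
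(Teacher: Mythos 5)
Your proposal follows exactly the route the paper indicates: feed Theorem~\ref{Thm_main} into the first stage of the three-stage argument from~\cite{GKKS13} (depth-$4$ reduction, then powering-gate depth-$5$, then depth-$3$), observe that the second and third stages are unaffected by the asymmetric choice of $\alpha$ and $\beta$, and simplify $\log(ds)=O(\log s)$ using $d=n^{O(1)}$ and $n\le s$. This matches the paper's own (brief) justification of Corollary~\ref{cor_1}, only spelled out in a bit more detail.
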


Finally, the use of the rationnals is important in the third part of
their proof. It will not be important in this paper.

\section{Useful propositions}

For proving Theorem~\ref{Thm_main}, we will need the following
propositions.

The next result is folklore. A proof can be found in~\cite{AJMV98}.

\begin{proposition}\label{Prop_homogeneous}
  If $f$ is a degree-$d$ polynomial computed by a
  $\{+,\times\}$-circuit $C$ of size $s$ such that the fan-in of each
  $+$-gate is unbounded and the fan-in of each $\times$-gate is
  bounded by $2$, then there exists a circuit $\tilde{C}$ of size $s
  (d+1)^2$ with $d+1$ outputs $O_0, O_1, \ldots, O_d$ such that:
  \begin{itemize}
  \item the fan-in of each $+$-gate is unbounded,
  \item the fan-in of each $\times$-gate is bounded by $2$,
  \item for each $i$, the gate $O_i$ computes the homogeneous part of
    $f$ of degree $i$,
  \item $\tilde{C}$ is homogeneous,
  \end{itemize} 
\end{proposition}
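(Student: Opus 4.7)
The plan is to perform the standard homogenization construction: replace every gate $\alpha$ of $C$ by $d+1$ new gates $\alpha^{(0)}, \alpha^{(1)}, \ldots, \alpha^{(d)}$, with the invariant that $\alpha^{(i)}$ computes the degree-$i$ homogeneous component of $[\alpha]$. The outputs $O_0, \ldots, O_d$ will then be the $d+1$ copies of the original output gate of $C$.

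First I would handle the inputs. A constant input $c$ is homogeneous of degree $0$, so set $c^{(0)} := c$ and $c^{(i)} := 0$ for $i \geq 1$ (using a $0$-input, cf.\ Remark~\ref{rem_zero}). A variable $x$ is homogeneous of degree $1$, so set $x^{(1)} := x$ and $x^{(i)} := 0$ for $i \neq 1$. Next, for an unbounded $+$-gate $\alpha$ with inputs $\beta_1,\ldots,\beta_k$, define $\alpha^{(i)} := \beta_1^{(i)} + \cdots + \beta_k^{(i)}$ for every $i \leq d$; this uses one new $+$-gate per homogeneous level, hence $d+1$ gates. For a binary $\times$-gate $\alpha$ with inputs $\beta, \gamma$, define
\begin{align*}
\alpha^{(i)} := \sum_{j=0}^{i} \beta^{(j)} \cdot \gamma^{(i-j)},
\end{align*}
which uses $i+1$ binary multiplication gates and one unbounded-fan-in addition gate per level, all of fan-in at most $2$ for the $\times$-gates.

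Correctness of the invariant follows by an easy induction on the circuit structure using the identity $[\beta\gamma]^{(i)} = \sum_{j=0}^{i}[\beta]^{(j)}[\gamma]^{(i-j)}$, and the fact that homogeneous decomposition commutes with addition. Homogeneity of $\tilde{C}$ is then automatic: by induction the gate $\alpha^{(i)}$ has formal degree at most $i$ (in fact exactly $i$ when non-zero), so at every new $+$-gate the summands all share the same intended degree; in particular the definition of $\alpha^{(i)}$ for a $\times$-gate is a sum over products whose total degree is exactly $i$.

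For the size, each original gate of $C$ yields at most $(d+1)^2$ new gates: a $+$-gate contributes only $d+1$ fresh $+$-gates, while a $\times$-gate contributes $d+1$ new $+$-gates together with $\sum_{i=0}^{d}(i+1) = \binom{d+2}{2}$ new $\times$-gates, which is at most $(d+1)^2$ in total. Summing over the $s$ original gates gives the stated bound $s(d+1)^2$. There is no real obstacle here; the only thing to be careful about is bookkeeping the fan-in restrictions (keeping $\times$ binary while allowing unbounded $+$), which the formula for $\alpha^{(i)}$ above respects by construction.
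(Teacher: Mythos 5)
The paper gives no proof of this proposition---it is stated as folklore with a pointer to \cite{AJMV98}---so the relevant comparison is to the standard homogenization construction, which is precisely what you reproduce. Replacing each gate $\alpha$ by its $d+1$ homogeneous slices $\alpha^{(0)},\ldots,\alpha^{(d)}$, with the convolution formula $\alpha^{(i)}=\sum_{j=0}^i \beta^{(j)}\gamma^{(i-j)}$ at a binary $\times$-gate and the slice-wise sum at a $+$-gate, is the classical argument; the correctness invariant and the gate count $s(d+1)^2$ (noting that a $\times$-gate spawns $\binom{d+2}{2}$ new $\times$-gates and at most $d+1$ new $+$-gates, which is $\le (d+1)^2$) are right.

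One point worth tightening to match the paper's formal definition of homogeneity. Several of your slices compute the zero polynomial (e.g.\ $x^{(0)}=0$ for a variable $x$, or $c^{(i)}=0$ for $i\ge1$), and in this paper a gate computing $0$ has degree $-\infty$, not the ``intended'' degree $i$. As written, a $+$-gate such as $\alpha^{(i)}$ may therefore have inputs of degrees $i$ and $-\infty$ simultaneously, which literally violates the condition ``all inputs of a $+$-gate have the same degree.'' The routine repair---essentially what Remark~\ref{rem_zero} of the paper is set up for---is to omit zero summands from every sum, and to replace any slice that becomes empty, or any gate computing the zero polynomial, by a single $0$-input. This pruning only removes gates, so the $s(d+1)^2$ bound is preserved, and the resulting circuit is homogeneous in the paper's exact sense. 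With that caveat filled in, your proof is complete.
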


\begin{lemma}\label{lem_1} 
  In a homogeneous circuit, all the gates compute homogeneous
  polynomials. Moreover, the degree of each gate equals the
  degree of the homogeneous polynomial computed by this gate.
\end{lemma}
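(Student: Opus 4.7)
The plan is to prove both statements simultaneously by induction on the topological depth of a gate, i.e.\ the length of a longest directed path from an input to that gate. The base and inductive step will mirror exactly the three-case definition of syntactic degree (inputs, $+$-gates, $\times/\odot$-gates), and Remark~\ref{rem_zero} will be what allows us to ignore degenerate cases.

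For the base case, the input gates: a variable computes $x_i$, which is homogeneous of degree $1$, matching the syntactic degree; a nonzero constant computes a homogeneous polynomial of degree $0$, again matching. A $0$-labeled input has syntactic degree $-\infty$ and computes the zero polynomial, which we can regard as homogeneous of every degree (in particular of degree $-\infty$ by convention), so the statement holds trivially. For the inductive step on a $+$-gate $\alpha$ with children $\alpha_1,\dots,\alpha_k$, homogeneity of the circuit says that all $\alpha_i$ share the same syntactic degree $d$, and the induction hypothesis says each $[\alpha_i]$ is a homogeneous polynomial of that same degree $d$. A sum of homogeneous polynomials of a common degree is again homogeneous of that degree (unless the sum vanishes, which is excluded by Remark~\ref{rem_zero}), and the syntactic degree of $\alpha$ is by definition the maximum of the $d$'s, i.e.\ $d$ itself, so both claims are preserved.

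For a $\times$-gate $\alpha$ with children $\alpha_1,\alpha_2$ of syntactic degrees $d_1,d_2$, the induction hypothesis gives $[\alpha_i]$ homogeneous of degree $d_i$. A product of two homogeneous polynomials is homogeneous of degree equal to the sum of their degrees, and this matches the syntactic definition which sets the degree of $\alpha$ to $d_1+d_2$. A $\odot$-gate is the special case where one of the two children is a (nonzero, by Remark~\ref{rem_zero}) constant of degree $0$; the product is then a nonzero scalar multiple of a homogeneous polynomial of degree $d_1$, still homogeneous of degree $d_1 = d_1 + 0$, again matching the syntactic definition.

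The argument is almost entirely formal, so the only real point to handle with care is the possibility that intermediate polynomial values cancel. The expected obstacle, therefore, is making precise that ``homogeneous'' is preserved by $+$: one must exclude accidental cancellation producing the zero polynomial (which is not homogeneous of a unique degree in the usual sense), and this is exactly the content of Remark~\ref{rem_zero}, so invoking that remark at each $+$-gate closes the gap cleanly.
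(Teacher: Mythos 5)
Your proof is correct and follows essentially the same structure as the paper's: induction over the DAG, splitting into input, $+$-gate, and $\times$/$\odot$-gate cases, with Remark~\ref{rem_zero} handling potential cancellation at $+$-gates. The only cosmetic difference is that you write the $\times$-gate case with two children while the paper allows arbitrary fan-in $\alpha_1,\ldots,\alpha_p$, but your argument extends verbatim.
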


\begin{proof} 
  We show this lemma by induction on the underlying graph.
  \begin{itemize}
  \item The lemma is true for all the input gates.
  \item If $\alpha$ is a $+$-gate of inputs $\alpha_1,\ldots ,
    \alpha_p$, then by homogeneity, these inputs have the same degree
    $d$. By induction hypothesis, the gates
    $\alpha_1,\ldots,\alpha_p$ compute some homogeneous polynomials of
    degree $d$. So $[\alpha]$ is a homogeneous polynomial of degree
    $d$ or $-\infty$. By the remark~\ref{rem_zero}, the degree of
    $[\alpha]$ is $d$.
  \item If $\alpha$ is a $\times$-gate (or a $\odot$-gate) of inputs
    $\alpha_1,\ldots,\alpha_p$, then by induction hypothesis the
    polynomials $[\alpha_1],\ldots,[\alpha_p]$ are homogeneous and
    their degrees correspond to the degrees of
    $\alpha_1,\ldots,\alpha_p$. Hence $[\alpha]$ is homogeneous and
    the degree of $[\alpha]$ equals the degree of $\alpha$.
  \end{itemize}
\end{proof}

We define {\it $\times$-balanced} $\{\times,+,\odot\}$-circuits.
\begin{definition}
  A $\{\times,+,\odot\}$-circuit $C$ is called $\times$-balanced if
  and only if all the following properties are verified:
  \begin{itemize}
  \item the fan-in of each $\times$-gate is at most $5$,
  \item the fan-in of each $+$-gate is unbounded,
  \item the fan-in of each $\odot$-gate is at most $2$,
  \item for each $\times$-gate $\alpha$, each one of its arguments is
    of degree at most half of the degree of $\alpha$.
  \end{itemize}
\end{definition}
The last condition can not be true for the multiplication by a
scalar. It is the reason, we introduced the operator $\odot$.

The next proposition was found by Agrawal and Vinay~\cite{AV08}. It slightly
strengthens Valiant, Skyum, Berkowitz and Rackoff’s famous result~\cite{VSBR83}
by adding a constraint on all the $\times$-gates.

\begin{proposition}\label{Prop_logdepth}
  Let $f$ be a homogeneous degree-$d$ polynomial computed by a size-$s$ circuit
  $\tilde{C}$ verifying the four points of the conclusion of Proposition~\ref{Prop_homogeneous}.
  Then $f$ is computed by a homogeneous $\times$-balanced $\{\times,+,\odot\}$-circuit of size $s^6+s^4+1$ and of degree $d$. 
\end{proposition}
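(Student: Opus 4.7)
My plan is to apply the Valiant–Skyum–Berkowitz–Rackoff (VSBR) parallelization technique, in the refinement due to Agrawal–Vinay that forces \emph{every} $\times$-gate of the new circuit to have children of degree at most half that of the parent. The key auxiliary objects are, for each pair of gates $(\alpha,\beta)$ of $\tilde{C}$ with $\beta$ lying in the subcircuit of $\alpha$ and both of high enough degree relative to $\alpha$, ``quotient'' polynomials $[\alpha:\beta]$ chosen so that $[\alpha]$ can be expanded as a sum $[\alpha:\beta]\cdot[\beta] + (\text{stuff not involving }[\beta])$.

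The construction proceeds as follows. Call a $\times$-gate $\gamma=\gamma_L\times\gamma_R$ of $\tilde{C}$ an \emph{$\alpha$-halving} gate if $\deg(\gamma)>\deg(\alpha)/2$ while $\deg(\gamma_L),\deg(\gamma_R)\leq\deg(\alpha)/2$; such a gate always exists above any large leaf of $\alpha$'s subcircuit because the $\times$-gates of $\tilde{C}$ are binary and $\tilde{C}$ is homogeneous, so the degree can drop by at most a factor of two across a single edge. Grouping the parse trees of Lemma~\ref{lem_1.11} according to their first halving gate yields the standard VSBR identity
\[
  [\alpha] \;=\; \sum_{\gamma \text{ $\alpha$-halving}} [\alpha:\gamma]\cdot[\gamma_L]\cdot[\gamma_R],
\]
in which every factor has degree at most $\deg(\alpha)/2$. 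A parallel identity for $[\alpha:\beta]$, derived by the product rule combined with a second halving jump inside the subcircuit rooted at the chosen halving gate, expresses $[\alpha:\beta]$ as a sum of products of at most five factors, each of degree at most $\deg([\alpha:\beta])/2$ — which is exactly what justifies the fan-in bound $5$ in the definition of $\times$-balanced.

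I would then assemble the final circuit from these identities: one $+$-gate of unbounded fan-in per large polynomial $[\alpha]$, one $+$-gate per pair polynomial $[\alpha:\beta]$, one $\times$-gate of fan-in at most $5$ per halving point appearing in the relevant sum, plus $\odot$-gates of fan-in $2$ to implement scalar multiplications (arising when one of the factors is a constant input of $\tilde{C}$, which is also why $\odot$-gates appear in the conclusion). Small subcircuits of $\tilde{C}$ are reused verbatim, homogeneity is preserved because each $+$-gate sums polynomials of equal degree, and a routine accounting with at most $s$ large gates, $s^2$ pairs, and $s$ halving points per pair gives the stated size bound $s^6+s^4+1$. The main obstacle is the recursion for $[\alpha:\beta]$: a naive product-rule expansion on the immediate children of a $\times$-gate fails to produce halved multiplications, because one child can carry almost all the degree; the fix is to first jump down to an $\alpha$-halving gate along the path to $\beta$ and only then apply the product rule, and checking that this recursion actually yields the $\times$-balance condition with $\times$-fan-in at most $5$ is the technical heart of the proof.
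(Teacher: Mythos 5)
Your proposal takes essentially the same route as the paper: the quotient/pair polynomials $[\alpha:\beta]$ are the paper's gates $(\alpha;\beta)$, the ``$\alpha$-halving'' gates are the same $\gamma$ chosen via the degree threshold, and you correctly identify the central obstacle (one child of the halving gate can still have unbounded relative degree) and the same fix (a second halving jump inside that child, yielding $\times$-fan-in $5$), with $\odot$-gates handling scalar factors. The only cosmetic difference is that the paper pins the decomposition down by reordering children and tracking the rightmost path of each parse tree, which is one concrete way to make your ``first halving gate'' canonical; the substance is identical.
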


We present a proof of it in Section~\ref{Sec_5} as the statement above is
slightly different from the one we can find in~\cite{AV08} or in~\cite{SY10} (the
constants are a bit improved).

\begin{corollary}
  Let $f$ be a polynomial of degree $d$ computed by a circuit of size
  $s$. Then $f$ is computed by a $\{+,\times\}$-circuit of size $(sd)^{O(1)}$ and of depth
  $O(\log(s)\log(d))$ where each $+$ and $\times$-gate is of fan-in $2$.
\end{corollary}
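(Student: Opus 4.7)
The plan is to apply Propositions~\ref{Prop_homogeneous} and~\ref{Prop_logdepth} to reduce $f$ to a homogeneous $\times$-balanced $\{\times,+,\odot\}$-circuit of polynomial size, and then to restructure that circuit so that every $+$- and $\times$-gate has fan-in $2$ and the total depth becomes $O(\log s \log d)$.

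First, I homogenize the input circuit using Proposition~\ref{Prop_homogeneous}, obtaining a circuit $\tilde C$ of size $O(sd^2)$ whose outputs $O_0, \ldots, O_d$ are the homogeneous parts of $f$. I apply Proposition~\ref{Prop_logdepth} to each $O_i$ in turn, producing a single homogeneous $\times$-balanced circuit of size $(sd)^{O(1)}$ that computes all of them, and a balanced fan-in-$2$ $+$-tree of depth $O(\log d)$ then recovers $f = \sum_i O_i$.

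The heart of the argument exploits $\times$-balancedness: since each $\times$-gate halves the degree, any root-to-leaf directed path contains at most $\log_2 d$ $\times$-gates. I stratify those gates into levels $L_0, L_1, \ldots, L_{\log_2 d}$ according to the length of the longest $\times$-only downward path from the gate. Between two consecutive $\times$-levels sits a sub-DAG consisting only of $+$ and $\odot$ gates, which computes a fixed ring-valued linear combination of its inputs. Hence every argument of a $\times$-gate at level $L_k$ is a linear combination of at most $(sd)^{O(1)}$ gates drawn from $L_0 \cup \cdots \cup L_{k-1}$ (together with leaves). I realize each such linear combination by a balanced fan-in-$2$ $+$-tree of depth $O(\log s)$, replace each $\times$-gate (of fan-in at most $5$) by a small binary $\times$-tree of constant depth, and absorb each $\odot$-gate into an ordinary $\times$-gate with a constant input. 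Chaining these constructions through the $O(\log d)$ levels yields the claimed depth bound $O(\log s \log d)$.

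The main subtlety is size control: one might fear that the inter-level $+/\odot$ sub-DAGs produce exponentially many paths and hence uncontrollable coefficients. This is harmless because only ring elements (not explicit representations of all paths) appear in the new circuit, and there are only polynomially many coefficients since each target level hosts at most $(sd)^{O(1)}$ gates. Thus every balanced $+$-tree has size $(sd)^{O(1)}$, and the overall circuit stays within $(sd)^{O(1)}$ gates, as required.
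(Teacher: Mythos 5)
The paper states this corollary without proof, as an immediate consequence of Propositions~\ref{Prop_homogeneous} and~\ref{Prop_logdepth}, and your argument is exactly the expected derivation: homogenize and $\times$-balance, note that $\times$-balancedness forces at most $\log_2 d$ multiplication gates on any root-to-leaf path, stratify the multiplication gates by this count, realize each inter-level $+/\odot$ sub-DAG as a balanced fan-in-$2$ addition tree whose coefficients are fixed ring elements, and split the fan-in-$\le 5$ multiplications into small binary trees. The size control you give (polynomially many gates per stratum, each fed by a polynomial-size balanced tree) is the right argument.

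Two small points are worth flagging. First, the phrase ``the length of the longest $\times$-only downward path from the gate'' is misleading: the path from one $\times$-gate to the next passes through $+$- and $\odot$-gates, so a literally $\times$-only path has length $0$ or $1$. What you clearly mean --- and what your preceding sentence about ``at most $\log_2 d$ $\times$-gates on any root-to-leaf path'' supports --- is the number of $\times$-gates along a longest downward path (equivalently, you could stratify by $\lceil\log_2 \deg\rceil$). Second, the $\times$-balanced circuit produced by Proposition~\ref{Prop_logdepth} has size $\sigma=(sd)^{O(1)}$, so the balanced $+$-trees you build have depth $O(\log\sigma)=O(\log s+\log d)$, not $O(\log s)$; summing over the $O(\log d)$ strata gives total depth $O\bigl(\log d\,(\log s+\log d)\bigr)$. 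This coincides with the corollary's $O(\log s\log d)$ exactly when $\log d=O(\log s)$, but for circuits where the degree is exponentially larger than the size an extra $\log^2 d$ term remains. This imprecision is shared by the corollary's statement (and by many citations of the VSBR/Agrawal--Vinay depth reduction) rather than being a defect of your argument, but you should state the bound your construction actually yields.
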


\section{Proof of Proposition~\ref{Prop_logdepth}}\label{Sec_5}

Let $f$ be a homogeneous polynomial computed by a circuit $\tilde{C}$
of size $s$ such that
\begin{itemize}
\item the fan-in of each $+$-gate is unbounded,
\item the fan-in of each $\times$-gate is bounded by $2$,
\item $\tilde{C}$ is homogeneous.
\end{itemize}

First, we can assume that all the internal vertices are of positive
degree. To do that, we just have to replace recursively each gate
such that all entries are of degree $0$ by the constant value of this
gate. Then, by homogeneity, constants can not be entries of a
$+$-gate. Then, for each $\times$-gate such that one entry is a
constant, we replace the $\times$-gate by a scalar $\odot$-gate. We
can notice that this transformation does not increase the size of the
circuit.  Second, we can reorder the children of the $\times$-gates
and of the $\odot$-gates such that for each one of these gates, the
degree of the rightmost child is larger or equals the degree of the
other child. We get a circuit $C_1$ of size $s$.
  
We define now a new circuit $C_2$ which satisfies the criteria of the
proposition.  For each pair of gates $\alpha$ and $\beta$ in $C_1$, we
define the gate $(\alpha;\beta)$ in $C_2$ as follows (we will see
after how to compute it):
\begin{itemize}
\item If $\beta$ is a leaf, then $[(\alpha;\beta)]$ equals the sum of
  the parse trees rooted in $\alpha$ such that $\beta$ appears in the
  rightmost path (i.e., the leaf of the rightmost path corresponds to
  the gate $\beta$).
\item If $\beta$ is not a leaf, then $[(\alpha;\beta)]$ equals the sum
  of the parse trees rooted in $\alpha$ such that $\beta$ appears in
  the rightmost path and where the subtree rooted in this rightmost
  gate $\beta$ is deleted. That is as if we replace the rightmost
  appearance of the gate $\beta$ by the input $1$ and we compute
  $[(\alpha;\beta)]$ with $\beta = 1$ a leaf.
\end{itemize}
We notice here that it is easy to get the polynomial computed by the
gate $\alpha$:
\begin{align*}
  [\alpha] &= \s_{T_{\alpha}\textrm{ parse tree}} \textrm{value}(T_\alpha) \\
  & = \s_{l\textrm{ leaf of }C_1} \s_{\substack{T_\alpha\textrm{ parse
        tree s.t.} \\ \textrm{the rightmost leaf of }T_\alpha \\
      \textrm{is a copy of }l}} \textrm{value}(T_\alpha)\\
  &=\sum_{l\textrm{ leaf of }C_1}[(\alpha;l)].
\end{align*} We can notice that the number of parse trees can be
exponential but the last sum is of polynomial size.

Now, we show how one can compute the value of the gates
$(\alpha;\beta)$.
\begin{itemize}
\item If $\beta$ does not appear on the rightmost path of a parse tree
  rooted in $\alpha$, then $(\alpha;\beta)=0$.
\item In the case $\alpha=\beta$, if $\alpha$ is a leaf, then
  $(\alpha,\beta)=\alpha$ and else $(\alpha,\beta)=1$.
\item Otherwise $\alpha$ and $\beta$ are two different gates and
  $\alpha$ is not a leaf. If $\alpha$ is a $+$-gate, then
  $[(\alpha;\beta)]$ is simply the sum of all
  $[(\alpha^\prime,\beta)]$, where $\alpha^\prime$ is a child of
  $\alpha$.
\item If $\alpha$ is a $\odot$-gate, then one child is a constant $c$
  and the other child is a gate $\alpha^\prime$. Then $(\alpha;\beta)$
  is simply the scalar operation
  $[(\alpha;\beta)]=[(c;c)]\odot[(\alpha^\prime;\beta)]$.
\item If $\alpha$ is a $\times$-gate. There are two cases.
  \begin{itemize}
  \item First case: $\beta$ is a leaf. Then
    $\deg(\alpha)>\deg(\beta)$ and $\deg(\beta)\leq 1$. On each rightmost path ending on
    $\beta$ of a parse tree rooted in $\alpha$, there exists exactly
    one $\times$-gate $\gamma$ and its right child on this path
    $\gamma_r$ such that:
    \begin{align}\label{Eq_gamma1}
      \deg(\gamma) > \frac 1 2 \deg(\alpha) \geq \deg(\gamma_r).
    \end{align}
    Conversely, we notice that for each gate $\gamma$
    satisfying~(\ref{Eq_gamma1}), if $[(\alpha;\gamma)]$ and
    $[(\gamma_r;\beta)]$ are not zero, then $\gamma$ is on a rightmost
    path from $\alpha$ to $\beta$.  Then,
    \begin{align*}
      [(\alpha;\beta)] =\sum_{l\textrm{ leaf, }\gamma \
        \times\textrm{-gate verifying
        }(\textrm{\ref{Eq_gamma1}})}[(\alpha;\gamma)][(\gamma_l;l)][(\gamma_r;\beta)].
    \end{align*}

    As $\beta$ is a leaf, $\deg(\alpha;\beta)=\deg(\alpha)$.
    Using~(\ref{Eq_gamma1}):
    \begin{align*}
      \deg(\alpha;\gamma)=\deg(\alpha)-\deg(\gamma) & <
      \deg(\alpha)/2 \\
      \deg(\gamma_r;\beta)=\deg(\gamma_r) & \leq
      \deg(\alpha)/2 \\
      \deg(\gamma_l;l)=\deg(\gamma_l)\leq \deg(\gamma_r) & \leq
      \deg(\alpha)/2.
    \end{align*}
    Consequently, $[(\alpha;\beta)]$ is computed by a depth-$2$
    circuit of size at most $s^2+1$: a $+$-gate, of fan-in $s^2$, where each child is a
    $\times$-gate of fan-in $3$. Each child of these $\times$-gates is
    of degree at most the half of the degree of the $\times$-gate.
  \item Second case: $\beta$ is not a leaf. Then there exists on every
    rightmost paths rooted in $\alpha$ a $\times$-gate $\gamma$ and
    its child on this path $\gamma_r$ such that:
    \begin{align}\label{Eq_gamma2}
      \deg(\gamma) \geq (\deg(\alpha)+\deg(\beta))/2 > \deg(\gamma_r).
    \end{align}
    Then by the same argument,
    \begin{align}\label{Eq_multcase2}
      [(\alpha;\beta)] =\sum_{l\textrm{ leaf, }\gamma \
        \times\textrm{-gate verifying
        }(\textrm{\ref{Eq_gamma2}})}[(\alpha;\gamma)][(\gamma_l;l)][(\gamma_r;\beta)].
    \end{align}
    We have this time with~(\ref{Eq_gamma2}):
    \begin{align*}
      \deg(\alpha;\beta) & =\deg(\alpha)-\deg(\beta) \\
      \deg(\alpha;\gamma)=\deg(\alpha)-\deg(\gamma) & \leq
      \left(\deg(\alpha) -\deg(\beta) \right)/2\\
      \deg(\gamma_r;\beta)=\deg(\gamma_r) & < \left(\deg(\alpha)
        -\deg(\beta) \right)/2.
    \end{align*}
    The problem here is that the degree of $(\gamma_l;l)$ could be
    larger than $(\deg(\alpha)-\deg(\beta))/2$. The gate $\alpha$ is a
    $\times$-gate and its left child is of positive degree (otherwise
    $\alpha$ would be a $\odot$-gate). Hence,
    $\deg(\alpha;\beta)>\deg(\gamma_l;l)$. If $\gamma_l$ is of degree
    at most $1$ (and so exactly $1$ since $\gamma$ is not a
    $\odot$-gate), then $(\alpha;\beta)$ is of degree at least
    $2$. The computation of the gate $(\alpha;\beta)$ by the
    formula~(\ref{Eq_multcase2}) works (i.e., the degree of
    $(\gamma_l;l)$ is smaller than half of the degree of
    $(\alpha;\beta)$). Otherwise, the degree of $\gamma_l$ is at least
    $2$ and at most $\deg(\alpha;\beta)$. As $l$ is a leaf, we can
    apply the first case to the gate $\gamma_l$ (even if $\gamma_l$ is
    not a $\times$-gate). There exists also on every rightmost paths
    ending on $l$ and rooted in $\gamma_l$ a $\times$-gate $\mu$ and
    its child on this path $\mu_r$ such that:
    \begin{align}\label{Eq_mu}
      \deg(\mu) > \deg(\gamma_l)/2 \geq \deg(\mu_r).
    \end{align}
    Then,
    \begin{align*}
      [(\gamma_l;l)]=\s_{\substack{l_2\textrm{ leave of }C_1 \\
          \mu \ \times\textrm{-gate verifying
          }(\textrm{\ref{Eq_mu}})}}
      [(\gamma_l;\mu)][(\mu_l;l_2)][(\mu_r;l)].
    \end{align*}
    And so,
    \begin{align}
      [(\alpha;\beta)] =\sum_{ l,l_2,\gamma,\mu}
      [(\alpha;\gamma)][(\gamma_r;\beta)][(\gamma_l;\mu)][(\mu_l;l_2)][(\mu_r;l)].
    \end{align}
    where the sum is taken over all $l,l_2$ leaves of $C_1$, $\gamma$
    $\times$-gate verifying (\ref{Eq_gamma2}) and $\mu$ $\times$-gate
    verifying (\ref{Eq_mu}).
      
    The degrees of the gates $(\gamma_l;\mu)$, $(\mu_l;l_2)$ and
    $(\mu_r;l_1)$ are bounded by half of the degree of
    $\gamma_l$. Hence, $[(\alpha;\beta)]$ is computed by a depth-$2$
    size-$s^4+1$ circuit. The $\times$-gates are of fan-in bounded by
    $5$ and the degree of their children is bounded by half their
    degree.
  \end{itemize}
\end{itemize}
  
Consequently, for each gates $\alpha$ and $\beta$ in $C_1$, the gate
$(\alpha;\beta)$ is computed in $C_2$ by a sub-circuit of size at most
$s^4+1$. At the end we get a circuit of size at most $s^6+s^2$ which
computes all gates $(\alpha;\beta)$. Finally, $f$ is computed by a
circuit of size bounded by $s^6+s^2+1$.

That proves the proposition.

\section{Proof of Theorem~\ref{Thm_main}\label{Sec_proofdepth4}}

For realizing the reduction to depth four, Koiran begins by
transforming the circuit into an equivalent arithmetic branching
program. Then, he parallelizes the branching program, and finally
comes back to the circuits. The problem with this strategy is that the
transformation from circuits to branching programs requires an
increase in the size of our object. If the circuit is of size $s$, our
new branching program is of size $s^{\log(d)}$. Here, the approach is
to directly parallelize the circuit without using arithmetic branching
programs in intermediate steps.

The idea is to split the circuit into two parts: gates of degree lower
than $\sqrt{d}$ and gates of larger degree. Furthermore, a circuit
such that the degree of each gate is bounded by $\sqrt{d}$ computes a
degree-$\sqrt{d}$ polynomial and so can be written as a sum of at most
$s^{O(\sqrt{d})}$ monomials. Then, if each part of our circuit
computes polynomials of degrees bounded by $\sqrt{d}$, we just have to
get the two depth-$2$ circuits and connect them together. The main
difficulty comes from the fact it is not always true that the
sub-circuit obtained by the gates of degree larger than $\sqrt{d}$ is
of degree smaller than $\sqrt{d}$. For example, for the comb graph
with $n-1$ $\times$-gates and $n$ variable inputs:
\begin{align*}
  x_1\cdot
  \left(x_2\cdot\left(x_3\cdot\left(\ldots\right)\right)\right)
\end{align*}
the degree of the first part is $\sqrt{n}$, but the degree of the
second one is $n-\sqrt{n}$. In fact, we will show that this problem
does not happen if we just consider $\times$-balanced graphs. In this
case, the two parts have a degree bounded by $\sqrt{d}$.

Moreover, following ideas from~\cite{GKKS13}, we are going to cut not
exactly at level $\sqrt d$. It will give a sharper result.

\begin{lemma}\label{Lem_main}
  Let $f$ be a homogeneous $n$-variate polynomial of degree $d$
  computed by a homogeneous $\times$-balanced
  $\{\times,+,\odot\}$-circuit $C$ of size $\sigma$.  Then $f$ is
  computed by a homogeneous $\sum\prod^{[15a]}\sum\prod^{\left[\frac d
      a\right]}$ circuit of size $1+\binom{\sigma
    +15a}{15a}+\sigma+\sigma \binom{n+\frac d a}{\frac d a}+n$ for any
  positive constant $a$ smaller than $d$.
\end{lemma}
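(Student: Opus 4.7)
The plan is to invoke Lemma~\ref{lem_1.11} and express $f$ as a sum over parse trees of $C$, $f=\sum_T m(T)$, then cut each parse tree along the first gates whose degree falls to $d/a$. Given a parse tree $T$, I would descend from the root and mark a gate $v$ as a \emph{frontier gate} as soon as $\deg(v)\leq d/a$; the multiset $M_T$ of these gates and the \emph{top skeleton} (the portion of $T$ lying strictly above the frontier) together encode $T$. Degree conservation along a parse tree (degrees add at $\times$-gates and are preserved at $+$- and $\odot$-gates) gives $\sum_{v\in M_T}\deg(v)=d$; since each $\deg(v)\leq d/a$, we get $|M_T|\geq a$ for free.

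The crux of the argument is to show $|M_T|\leq 15a$, and this is the step where the $\times$-balanced hypothesis is used substantively. Every $\times$-gate on the top skeleton has fan-in at most $5$ and children of degree at most half its own; repeatedly descending through the heaviest child of each $\times$-gate traces a path on which the degree shrinks by a factor between $2$ and $5$ at each step and terminates at a frontier gate as soon as the degree drops to at most $d/a$. At most four side-children are spawned per $\times$-gate along this heavy path, each of which is itself the root of an independent sub-skeleton whose frontier size I would bound recursively. Because the side-child degrees sum with the heavy-child degree to the parent's, a recursion of the form $T(D)\leq T(D_{\text{heavy}})+\sum T(D_{\text{side}})$ with $T(D)=1$ for $D\leq d/a$ closes with $T(d)=O(a)$. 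I expect the most delicate part of the proof to be pinning this constant down to $15$, likely through a careful case split according to how many of a $\times$-gate's children are already below the threshold.

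Granting the frontier bound, I group parse trees by their multiset: $f=\sum_M C_M\prod_{v\in M}[v]$, where the outer sum ranges over multisets $M$ of at most $15a$ gates of $C$ and $C_M$ aggregates the top-skeleton constants (contributions from $\odot$-gates) of all parse trees whose frontier is $M$. Only multisets satisfying $\sum_{v\in M}\deg(v)=d$ contribute nontrivially, and for each such $M$ the product $\prod_{v\in M}[v]$ is homogeneous of degree $d$, since each $[v]$ is homogeneous of degree $\deg(v)$ by Lemma~\ref{lem_1}.

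Finally, for each gate $v$ with $\deg(v)\leq d/a$, the homogeneous polynomial $[v]$ has degree at most $d/a$ in $n$ variables and so is a sum of at most $\binom{n+d/a}{d/a}$ monomials of degree at most $d/a$: exactly a $\sum\prod^{[d/a]}$ subcircuit above the $n$ input variables. The assembled depth-$4$ circuit has one output $+$-gate, at most $\binom{\sigma+15a}{15a}$ top $\prod^{[15a]}$-gates (one per surviving multiset), at most $\sigma$ middle $+$-gates (one per gate $v$ of $C$ with $\deg(v)\leq d/a$), $\sigma\binom{n+d/a}{d/a}$ bottom $\prod^{[d/a]}$-gates, and the $n$ inputs, for a total matching the stated size. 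Homogeneity of the resulting circuit is immediate from the degree-matching of every surviving multiset.
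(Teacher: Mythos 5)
Your overall structure matches the paper's. The paper splits the circuit into $C_1$ (gates of degree $<d/a$) and $C_2$ (gates of degree $\geq d/a$ plus their inputs); your frontier multiset $M_T$ is exactly the multiset of leaves of a parse tree of $C_2$, and your "top skeleton" is that parse tree's interior. Your size accounting for the final depth-$4$ circuit is also correct, as is the observation that homogeneity survives because every surviving multiset has total degree $d$.

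The gap is in the step you yourself flag as the crux: bounding $|M_T|\leq 15a$. The local recursion
$T(D)\leq T(D_{\text{heavy}})+\sum T(D_{\text{side}})$
does \emph{not} close to $O(a)$. Consider a $\times$-gate of degree $D$ with exactly one over-threshold child $D_1\leq D/2$ and up to four light children each of degree between $1$ and $d/a$. Locally this adds up to $4$ to the frontier while reducing $D$ by as little as $4$. A linear ansatz $T(D)\leq cDa/d$ then requires $j(1-ca/d)\leq 0$ at such nodes, which fails whenever $d>ca$ and $j\geq 1$ --- and in the intended application $d/a=\sqrt{d\log\sigma/\log n}$ is unbounded. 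Pushing the recursion through, chains of such "one heavy child" gates have length at most $\log_2 a$ by $\times$-balancedness, and there can be $\Theta(a)$ such chains (one per edge of the contracted branching tree), so the naive recursion only yields $O(a\log a)$, which would weaken the exponent in Theorem~\ref{Thm_main} by a logarithmic factor.

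The paper closes the gap with a \emph{global} counting argument, not a local recursion. It partitions the $\times$-gates of the skeleton into $\mathcal{G}_0$ (all children are frontier leaves), $\mathcal{G}_1$ (exactly one child is not a leaf), $\mathcal{G}_2$ ($\geq 2$ non-leaf children), and argues: $|\mathcal{G}_0|\leq a$ because these subtrees are disjoint, each of degree $\geq d/a$, and the total degree is $d$; $|\mathcal{G}_2|\leq |\mathcal{G}_0|\leq a$ by the tree property; and, crucially, $|\mathcal{G}_1|\leq a$ because $\times$-balancedness forces the \emph{leaf} children of each $\mathcal{G}_1$-gate to have total degree $\geq d/a$ (the single non-leaf child has degree $\leq$ half the parent's), and these leaf-child subtrees are disjoint across all $\mathcal{G}_1$-gates with combined degree $\leq d$. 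Then every leaf attaches to a unique first $\times$-gate above it, each of fan-in $\leq 5$, giving $15a$. The information your recursion discards --- that the light children of a $\mathcal{G}_1$-gate collectively carry degree $\geq d/a$ and that these degree charges are disjoint across gates --- is precisely what makes $|\mathcal{G}_1|\leq a$ work, and it cannot be recovered from the one-variable recurrence $T(D)$.
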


To get nicer expressions, we will use the following consequence of
Stirling's formula: (A proof appears in~\cite{AV08})

\begin{lemma}\label{Lem_stirling}
  \begin{align*}
    \binom{k+l}l = 2^{O\left(l+l\log \frac k l\right)}
  \end{align*}
\end{lemma}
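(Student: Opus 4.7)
The plan is to derive this from the elementary estimate $\binom{m}{r} \leq (em/r)^r$, for which only a one-sided form of Stirling is needed. Starting from
$$\binom{k+l}{l} = \frac{(k+l)!}{k!\,l!} \leq \frac{(k+l)^l}{l!}$$
and using the lower bound $l! \geq (l/e)^l$ (which follows from comparing $l!$ with the $l$-th term in the series for $e^l$), I obtain
$$\binom{k+l}{l} \leq \left(\frac{e(k+l)}{l}\right)^{l}.$$
Taking base-$2$ logarithms yields
$$\log_2 \binom{k+l}{l} \;\leq\; l\log_2 e \;+\; l\log_2\!\bigl(1 + k/l\bigr).$$

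From here I would split on the sign of $k - l$. In the regime $k \geq l$ (which is the one actually used in Lemma~\ref{Lem_main}, where the pairs $(k,l)$ are $(\sigma,15a)$ and $(n,d/a)$), the bound $1 + k/l \leq 2k/l$ gives $l\log_2(1+k/l) \leq l + l\log_2(k/l)$, so the total is $O\!\bigl(l + l\log(k/l)\bigr)$ as claimed. In the complementary regime $k < l$, the factor $1 + k/l$ is at most $2$, so $l\log_2(1+k/l) \leq l$ and the right-hand side is already $O(l)$, which absorbs into the same $O(l + l\log(k/l))$ expression (the nonpositive contribution of $l\log(k/l)$ is swallowed by the hidden constant in front of the $l$ term).

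I do not anticipate any real obstacle: the lemma is purely arithmetic, no asymptotic refinement of Stirling is needed beyond the factorial lower bound, and the two cases are settled by a single application of $\log_2(1+x) \leq 1 + \log_2 x$ for $x \geq 1$ and $\log_2(1+x) \leq 1$ for $x \leq 1$. The only mildly delicate point is noting that the case $k<l$ still fits the stated form, which I would handle by the constant-absorption remark above.
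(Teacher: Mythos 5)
Your main estimate is correct and follows the standard elementary route: $\binom{k+l}{l}\le (k+l)^l/l!\le \bigl(e(k+l)/l\bigr)^l$, hence $\log_2\binom{k+l}{l}\le l\log_2 e+l\log_2(1+k/l)=O\bigl(l+l\log\frac kl\bigr)$ whenever $k\ge l$. The paper itself gives no proof of this lemma (it defers to Agrawal and Vinay, who invoke Stirling's formula); your argument needs only the one-sided bound $l!\ge (l/e)^l$, so it is a self-contained and slightly more elementary substitute, and it covers both uses in the proof of Theorem~\ref{Thm_main}, where the pairs $(k,l)=(\sigma,15a)$ and $(k,l)=(n,\frac da)$ indeed satisfy $k\ge l$.

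The one flaw is your treatment of the case $k<l$. You assert that the resulting $O(l)$ bound ``absorbs into'' $O\bigl(l+l\log\frac kl\bigr)$ because the nonpositive term is swallowed by the hidden constant in front of $l$; that is not so. When $k\ll l$ the term $l\log\frac kl$ is negative with magnitude $l\log\frac lk$, which can be arbitrarily large compared to $l$ (take $k=1$), so no constant multiple of $l+l\log\frac kl$ dominates $l$. In fact for $k=1$ the exponent $l+l\log\frac1l$ tends to $-\infty$ while $\binom{l+1}{l}=l+1$ grows, so the lemma read literally fails in that regime; it is implicitly a statement for $k\ge l$ (or should be written with $\log(1+\frac kl)$). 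The honest fix is simply to note this restriction, which is all the paper needs, and to drop the constant-absorption remark.
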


First, let us see how Lemma~\ref{Lem_main} implies
Theorem~\ref{Thm_main}.

\begin{proof}[Proof of Theorem~\ref{Thm_main}]
  Let $f$ be an $n$-variate polynomial computing by a circuit of size
  $s$ and degree $d$. Let $\tilde{C}$ be the homogeneous circuit for
  the polynomial that we get by
  Proposition~\ref{Prop_homogeneous}. The circuit $\tilde{C}$ is of
  size $t=s(d+1)^2$ and computes all polynomials $f_0, \ldots, f_d$
  where $f_i$ is the homogeneous part of $f$ of degree $i$.  Then by
  Proposition~\ref{Prop_logdepth}, for each $i\leq d$, there exists a
  homogeneous $\times$-balanced circuit $C$ of size $\sigma=t^6+t^4+1$
  computing $f_i$.  We apply Lemma~\ref{Lem_main} for the circuit $C$
  with $a=\sqrt{d\frac{\log n}{\log \sigma}}$. Using
  Lemma~\ref{Lem_stirling} we get a homogeneous
  $\sum\pn{O(\alpha)}\sum\pn{\beta}$ circuit of size $1+\binom{\sigma
    +15a}{15a}+\sigma+\sigma \binom{n+\frac d a}{\frac d
    a}+n=2^{O\left(\sqrt{d\log \sigma \log n}\right)}$ with
  $\alpha=\sqrt{d\frac{\log n}{\log \sigma}}$ and $\beta=\sqrt{d\frac{\log
      \sigma}{\log n}}$.  At the end, we just have to add together
  homogeneous parts $f_i$.  As $\sigma=O(s^6d^{12})$, it gives a
  $2^{O\left(\sqrt{d\log(ds)\log n}\right)}$ upper bound for the size.
\end{proof}


Proving Lemma~\ref{Lem_main} will complete the proof.

\begin{proof}[Proof of Lemma~\ref{Lem_main}]
  We define $C_1$ and $C_2$ subcircuits of $C$ as follows. $C_1$ is
  the subcircuit of $C$ we get by keeping only gates of $C$ of degree
  $<\frac d a$. Circuit $C_2$ is made up of the remaining gates (i.e.,
  those of degree $\geq \frac d a$) and of the inputs of these
  gates. These inputs are the only gates which belong both in $C_1$
  and in $C_2$.

  Each gate $\alpha$ of $C_1$ has degree at most $\frac d a$, so
  computes a polynomial of degree at most $\frac d a$. By homogeneity
  of $C$, the polynomial computed in $\alpha$ is
  homogeneous. Consequently, $\alpha$ is a homogeneous sum of at most
  $\binom{n+\frac d a}{\frac d a}$ monomials, and so, can be computed
  by a homogeneous depth-2 circuit of size $1+\binom{n+\frac d
    a}{\frac d a}+n$ (The ``$1$'' encodes the $+$-gate, the ``$n$''
  encodes the input gates, and the remainder encodes the
  $\times$-gates).

  We are going to show now that the degree of $C_2$ is bounded by
  $15a$.

  Let $\delta$ be the degree of $C_2$. There exists a degree-$\delta$
  monomial $m$ in $C_2$. Let $T$ be a parse tree computing $m$.

  We can notice that a gate of $C_2$ can occur in many parse trees, and
  that in a parse tree one could find several copies of a gate of $C_2$.

  We partition the set of $\times$-gates of $T$ into $3$ sets:
  \begin{itemize}
  \item $\mathcal{G}_0 = \{\alpha \in T | \alpha \textrm{ is a
    }\times\textrm{-gate and all children of }\alpha\textrm{ are
      leaves of }T \}$
  \item $\mathcal{G}_1 = \{\alpha \in T | \alpha \textrm{ is a
    }\times\textrm{-gate and exactly one child of }\alpha\textrm{ is
      not a leaf} \}$
  \item $\mathcal{G}_2 = \{\alpha \in T | \alpha \textrm{ is a
    }\times\textrm{-gate and at least two children of }\alpha\textrm{
      are not leaves} \}$.
\end{itemize}

Then, if we consider the sub-tree $S$ of $T$ where the gates of $S$
are exactly the gates of $T$ which do not appear in $C_1$, then
$\mathcal{G}_0$ are leaves of $S$, $\mathcal{G}_1$ are internal
vertices of fan-in $1$ and $\mathcal{G}_2$ are internal vertices of
fan-in at least $2$.

The proof is in two parts. First we upperbound the size of the sets
$\mathcal{G}_0$, $\mathcal{G}_1$ and $\mathcal{G}_2$. Then, we
upperbound the degree of $m$.

In $C$, by Lemma~\ref{lem_1}, the degree of $m$ is at least the sum of
the degrees of the gates of $\mathcal{G}_0$ (since two of these gates
can not appear on the same path). Each one of these gates is in $C_2$,
so is of degree at least $\frac d a$ in $C$. As $m$ is of degree at
most $d$ in $C$, it means that the number of gates in $\mathcal{G}_0$
is at most $a$.

In $C$, alway by Lemma~\ref{lem_1}, the degree of $m$ is at least the
sum of the degrees of the leaves of $C_2$ directly connected to a gate
of $\mathcal{G}_1$. For each gate $\alpha$ of $\mathcal{G}_1$, exactly
one of its inputs $\beta$ is in $C_2$, hence of degree at least $\frac
d a$ in $C$.  By Proposition~\ref{Prop_logdepth}, the degree of
$\alpha$ is at least two times the degree of $\beta$, it yields that
the sum of degrees of inputs of $\alpha$ which are in $C_1$ is also at
least $\frac d a$. Then, the number of vertices in $\mathcal{G}_1$ is
at most $a$.

Finally, in a tree, the number of leaves is larger than the number of
vertices of fan-in at least $2$. Then in $S$, we get that:
\[
|\mathcal{G}_2|\leq |\mathcal{G}_0| \leq a.
\]

In $C_2$, the degree of the monomial $m$ is the number of leaves
labelled by a non-constant leaf in $T$. We match each leaf with the
first $\times$-gate which is connected to it. As in $T$, the fan-in of
the $\times$-gates is bounded by $5$, the fan-in of the $+$-gates is
bounded by $1$ and each $\odot$-gates add only one constant input,
then the number of variable leaves connected to a particular
$\times$-gate is at most $5$. So the number of leaves in $T$ is at
most:
\[
5\times \left(|\mathcal{G}_0| + |\mathcal{G}_1| +
  |\mathcal{G}_2|\right)\leq 15a.
\]

This proves that the degree of $C_2$ is at most $15a$. Then, the
number of inputs of $C_2$ is bounded by the number of gates in $C_1$
and so in $C$ (which is $\sigma$).  So, there exists a depth-$2$
circuit which compute $C_2$, of size $1+\binom{\sigma
  +15a}{15a}+\sigma$ with as inputs the gates of $C_1$.

Consequently, each polynomial $f_i$ can be computed by a homogeneous
$\sum\prod^{[a]}\sum\prod^{\left[\frac d a\right]}$ circuit of size at
most $1+\binom{\sigma +15a}{15a}+\sigma+\sigma \binom{n+\frac d
  a}{\frac d a}+n$.
\end{proof}

\section{A lower bound}\label{Sec_lowerbound}

In~\cite{GKKS12}, it was proved that if a homogeneous depth-four
circuit computing $\lang{Perm}_n$ has its bottom fan-in bounded by
$t$, then the size of the circuit is at least $2^{\Omega\left(\frac n
    t \right)}$. But what happens if bottom multiplication gates all
have a large fan-in? We show that this implies a similar lower bound
for the size of the circuit:

\begin{theorem}\label{Thm_lowbound>t}
  If $C$ is a homogeneous $\s\p\s\p$ circuit which computes
  $\lang{Perm}_n$ (or $\lang{Det}_n$) such that the fan-in of each
  bottom multiplication gate is at least $t$, then the size of $C$ is
  at least $2^{\Omega\left( t \log(n)\right)}$.
\end{theorem}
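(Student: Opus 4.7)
The plan is a straightforward monomial-counting argument made possible by homogeneity. The first step is to extract structural bounds. By homogeneity, at each middle $+$-gate all bottom-product children share a common degree, and since each bottom $\prod$-gate has fan-in at least $t$, every middle sum computes a homogeneous polynomial of some degree $d \geq t$. Consequently, since each top $\prod$-gate outputs a degree-$n$ polynomial whose factors all have degree at least $t$, its fan-in is at most $n/t$.

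Next I would bound the support of each top-level product. Since $C$ has size $s$, each middle sum is a sum of at most $s$ bottom-product monomials, and so its support contains at most $s$ distinct monomials. The support of a top product $\prod_{j=1}^{a_i} P_{i,j}$ of at most $n/t$ middle sums is contained in the set-product of the factor supports, hence has size at most $s^{n/t}$. Taking a union over the at most $s$ top products, the union of all top-product supports has size at most $s \cdot s^{n/t} = s^{1+n/t}$.

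The last step is to confront this with the target polynomial. Both $\lang{Perm}_n$ and $\lang{Det}_n$ have exactly $n!$ distinct multilinear monomials in their support, each with a non-zero coefficient. Every monomial appearing in the output of $C$ must lie in the support of at least one top product, since cancellations across the top $+$-gate can annihilate but never create monomials. Therefore $s^{1+n/t} \geq n!$, and taking logarithms, while using $\log(n!) = \Theta(n \log n)$ and $t \leq n$ (so $1+n/t \leq 2n/t$), yields $\log s = \Omega(t \log n)$, as claimed. I do not anticipate a serious obstacle here; the key conceptual observation is that homogeneity converts the bottom fan-in lower bound $t$ into the top fan-in upper bound $n/t$, which is precisely what makes the counting match.
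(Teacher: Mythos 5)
Your proof is correct and takes essentially the same approach as the paper: homogeneity converts the bottom fan-in lower bound $t$ into a top product fan-in upper bound $n/t$, and then a monomial-support count against the $n!$ monomials of $\lang{Perm}_n$ gives $s^{O(n/t)} \geq n!$. The paper merely packages the counting step as a separate lemma (a lower bound for $\s\p\s\p$ circuits with level-$3$ fan-in at most $v$) via the $\mathcal{M}_E$, $E^+$, $E^{\times k}$ notation, but the substance is identical.
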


Our approach is only based on counting the number of monomials. We
begin by some definitions.
\begin{definition}
  For a multivariate polynomial $f(\mathbf{x})=\sum_{i=1}^{m_f} a_i
  \mathbf{x_i}$, we will denote $\mathcal{M}_f$ the set
  $\{\,\mathbf{x_i}\mid \mathbf{x_i} \textrm{ is a monomial of
  }f\,\}$. If $E$ is a set of polynomials, we also define
  $\mathcal{M}_E=\bigcup_{f\in E}\mathcal{M}_f$.
\end{definition}

We can notice $\mathcal{M}_{\lang{Perm}_n}=\{\,x_{1,\sigma(1)}\ldots
x_{n,\sigma(n)}\mid \sigma \in \mathfrak{S}_n\,\}$. So, $\left|
  \mathcal{M}_{\lang{Perm}_n} \right|=n!$.

\begin{definition}
  Let $E$ be a set of polynomials.  Let us denote
  \begin{align*}
    E^+=\{\,f_1+\ldots +f_m\mid m \in \mathbb{N} \textrm{ and }
    \forall
    i\leq m,\, f_i\in E\,\} \\
    \textrm{and }\ E^{\times k}=\{\,f_1\times \ldots \times f_m\mid m
    \leq k \textrm{ and } \forall i\leq m,\, f_i\in E\,\}
  \end{align*}
\end{definition}

\begin{lemma}\label{Lem_M+*}
  Let $E$ be a set of polynomials.  Then,
  \begin{align*}
    \mathcal{M}_{E^+} = \mathcal{M}_E \textrm { and } 
    \lvert \mathcal{M}_{E^{\times s}}\rvert \leq
  \left(\left|\mathcal{M}_E\right|+1\right)^s.
\end{align*}
\end{lemma}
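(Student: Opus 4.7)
The plan is to read off both claims directly from the definitions, keeping in mind one subtlety: when we add or multiply polynomials, some monomials may cancel, so the set of monomials in a sum (resp.\ product) is always contained in (but may be strictly smaller than) the union (resp.\ set of products) of monomials of the summands (resp.\ factors). This ``subadditive/submultiplicative'' behavior of $\mathcal{M}$ is exactly what is needed: the first claim asks for an equality that only requires one direction to depend on cancellation, and the second claim asks only for an upper bound.

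For the equality $\mathcal{M}_{E^+} = \mathcal{M}_E$, I would first take $m=1$ in the definition of $E^+$ to observe $E \subseteq E^+$, giving $\mathcal{M}_E \subseteq \mathcal{M}_{E^+}$ immediately. Conversely, for any $g = f_1+\cdots+f_m \in E^+$, a monomial of $g$ survives the sum only if it already appears in some $f_i$, so $\mathcal{M}_g \subseteq \bigcup_{i=1}^{m}\mathcal{M}_{f_i} \subseteq \mathcal{M}_E$. Taking the union over $g \in E^+$ yields the reverse inclusion.

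For the bound $|\mathcal{M}_{E^{\times s}}| \leq (|\mathcal{M}_E|+1)^s$, I would argue as follows. Every $g \in E^{\times s}$ has the form $f_1\cdots f_m$ with $m \leq s$ and $f_i \in E$, and every monomial of this product is of the form $\mathbf{x_1}\cdots \mathbf{x_m}$ with $\mathbf{x_i}\in \mathcal{M}_{f_i}\subseteq \mathcal{M}_E$ (cancellations can only decrease the set of monomials). To count these products while handling the variable number $m$ of factors, I would pad each product to exactly $s$ slots: encode a monomial of $\mathcal{M}_{E^{\times s}}$ by a tuple $(\mathbf{y_1},\ldots,\mathbf{y_s})$ where each $\mathbf{y_i}$ is either an element of $\mathcal{M}_E$ or a distinguished ``empty'' symbol, and interpret the tuple as the product over its nonempty coordinates. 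Every element of $\mathcal{M}_{E^{\times s}}$ arises from at least one such tuple, and the number of tuples is exactly $(|\mathcal{M}_E|+1)^s$, which gives the stated inequality.

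The argument is elementary and presents no real obstacle; the only thing to be careful about is not claiming an equality in the product case (where cancellation actually matters for the bound's sharpness) and remembering that allowing fewer than $s$ factors in $E^{\times s}$ is handled cleanly by the $+1$ in the base of the counting.
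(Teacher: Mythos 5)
Your proof is correct and follows essentially the same argument as the paper: the equality is proved by the trivial inclusion $E \subseteq E^+$ plus the observation that a monomial surviving in a sum must already occur in a summand, and the product bound is obtained by padding each $\leq s$-factor product to exactly $s$ slots, with your ``empty symbol'' playing the same role as the paper's adjoined constant $1$ in $\mathcal{M}_E\cup\{1\}$.
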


\begin{proof}
  If $\mathbf{x}$ is a monomial in $\mathcal{M}_{E^+}$, it means there
  exist polynomials $f_1,\ldots,f_m$ in $E$ such that $\mathbf{x}$ is
  a monomial of $f_1+\ldots+f_m$. Then there exists $i\leq m$ such
  that $\mathbf{x}$ is a monomial of $f_i$ and so $\mathbf{x}$ is an
  element of $\mathcal{M}_E$. Hence $\mathcal{M}_{E^+} \subseteq
  \mathcal{M}_E$. Moreover, as $E\subseteq E^ +$, we get $
  \mathcal{M}_E\subseteq \mathcal{M}_{E^+} $.

  Moreover, if $\mathbf{x}$ is a monomial in $\mathcal{M}_{E^{\times
      s}}$, it means there exist polynomials $f_1,\ldots,f_m$ in $E$
  such that $\mathbf{x}$ is a monomial of $f_1\times \ldots \times
  f_m$ with $m\leq s$. It implies that $\mathbf{x} \in
  \{\,\mathbf{x_1}\times \ldots \times \mathbf{x_m}\mid m\leq s
  \textrm{ and } \mathbf{x_i}\in \mathcal{M}_E\,\}$. That is to say,
  $\mathbf{x} \in \{\,\mathbf{x_1}\times \ldots \times
  \mathbf{x_s}\mid \textrm{ and } \mathbf{x_i}\in
  \left(\mathcal{M}_E\cup \{1\}\right)\,\}$. It proves the lemma.
\end{proof}

Let $C$ be a $\sum\prod\sum\prod$ circuit. The gates of the circuit
are layered into five levels. Inputs are at level 0, multiplication
gates at levels 1 and 3 and addition gates at levels 2 and 4. For each
level $i$, let us denote $s_i$ the number of gates at this level,
$t_i$ an upper bound on the fan-in of these gates and $E_i$ the set of
polynomials computed at this level.

\begin{lemma}\label{Lem_lowbound<v}
  Any $\s\p\s\p$ circuit that computes $\lang{Perm}_n$ (or
  $\lang{Det}_n$) such that the fan-in of the multiplication gates at
  level 3 is bounded by $v$ must have size
  $\exp\left[\Omega\left(\frac n v \log(n) \right)\right]$.
\end{lemma}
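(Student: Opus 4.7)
The plan is to apply Lemma~\ref{Lem_M+*} level by level through the $\s\p\s\p$ circuit, tracking how the number of distinct monomials can grow, and then compare this against the monomial count of $\lang{Perm}_n$ or $\lang{Det}_n$. With the notation $s_i$, $E_i$ set up in the paragraph above, I note first that each multiplication gate at level~$1$ takes its inputs directly from variables (and constants), so it computes a single monomial. Hence $|\mathcal{M}_{E_1}|\le s_1$. At level~$2$, the addition gates give $E_2\subseteq E_1^+$, and the first half of Lemma~\ref{Lem_M+*} yields $|\mathcal{M}_{E_2}|\le|\mathcal{M}_{E_1}|\le s_1$.

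Next I use the fan-in hypothesis at level~$3$: each product there combines at most $v$ polynomials from $E_2$, so $E_3\subseteq E_2^{\times v}$, and the second half of Lemma~\ref{Lem_M+*} gives
\[
|\mathcal{M}_{E_3}|\le\left(|\mathcal{M}_{E_2}|+1\right)^{v}\le(s_1+1)^{v}.
\]
The top addition at level~$4$ again does not enlarge the set of monomials, so the polynomial computed by $C$ has at most $(s_1+1)^v$ distinct monomials.

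On the other hand, both $\lang{Perm}_n$ and $\lang{Det}_n$ have exactly $n!$ distinct monomials: for the permanent this is immediate from the explicit formula, and for the determinant the $n!$ signed terms $\pm x_{1,\sigma(1)}\cdots x_{n,\sigma(n)}$ are supported on pairwise distinct multilinear monomials, so no cancellation can take place. Hence $n!\le(s_1+1)^v$, and taking logarithms with $\log(n!)=\Theta(n\log n)$ gives
\[
\log(s_1+1)=\Omega\!\left(\frac{n\log n}{v}\right),
\]
so that $s_1=\exp\bigl[\Omega((n/v)\log n)\bigr]$. Since $s_1$ is at most the total number of gates of $C$, this proves the lemma.

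The argument is essentially a direct level-by-level unrolling of Lemma~\ref{Lem_M+*}, so I do not expect any serious obstacle. The only point worth verifying carefully is that the determinant really contributes $n!$ distinct monomials (i.e., the sign pattern does not produce cancellations between different permutations), which is why the same bound holds for $\lang{Det}_n$ as for $\lang{Perm}_n$.
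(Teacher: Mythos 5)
Your argument is correct and is essentially the same as the paper's: bound $|\mathcal{M}_{E_1}|\le s_1$, propagate through the level inclusions $E_2\subseteq E_1^+$, $E_3\subseteq E_2^{\times v}$, $E_4\subseteq E_3^+$ via Lemma~\ref{Lem_M+*} to get $|\mathcal{M}_{E_4}|\le(s_1+1)^v$, and compare against the $n!$ distinct monomials of $\lang{Perm}_n$ or $\lang{Det}_n$. The only cosmetic difference is that you unroll the inclusions level by level and spell out the determinant's non-cancellation explicitly, which the paper leaves implicit.
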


\begin{proof}
  We notice that the hypothesis in the lemma about the bound of the
  fan-in just states that $t_3\leq v$.

  The polynomials in $E_1$ are just monomials. So,
  $\left|\mathcal{M}_{E_1}\right| \leq s_1$. We have:
  \begin{align*}
    E_4 \subseteq E_3^+, \ E_3 \subseteq E_2^{\times t_3} \textrm{ and
    }E_2 \subseteq E_1^+.
  \end{align*}
  Then by Lemma~\ref{Lem_M+*},
  \begin{align*}
    \left|\mathcal{M}_{E_4}\right| \leq (s_1+1)^{t_3} \leq (s_1+1)^v.
  \end{align*}
  However, as $\lang{Perm}_n$ is an element of $E_4$, we also have:
  \begin{align*}
    \left|\mathcal{M}_{E_4}\right| \geq
    \left|\mathcal{M}_{\lang{Perm}_n}\right| = n!.
  \end{align*}
  So, $s_1 \geq (n!)^{\frac 1 v}-1 = 2^{\Omega\left(\frac n v
      \log(n)\right)}$
\end{proof}

The result of this lemma directly implies
Theorem~\ref{Thm_lowbound>t}.
\begin{proof}[Proof of Theorem~\ref{Thm_lowbound>t}]
  Let $C$ be a homogeneous $\s\p\s\p$ circuit which computes
  $\lang{Perm}_n$ (or $\lang{Det}_n$) such that the fan-in of each bottom gate is at
    least $t$. It implies that the degree of each gate at level 1 and
    2 is at least $t$. As the circuit is homogeneous, the degree of
    a gate at level 3 is upperbounded by $n$ and lowerbounded by
    $t$ times the number of inputs of this gate. Consequently, in $C$,
    the fan-in of the multiplication gates at level 3 is bounded by
    $\frac n t$. Then Lemma~\ref{Lem_lowbound<v} implies the theorem.
\end{proof}

In fact, for computing the determinant, we can also notice that the fan-in of multiplication gates in
the depth-four circuits that we get either in~\cite{Koi12} or here in
Section~\ref{Sec_proofdepth4}, is linear in $\sqrt{n}$. It implies that in this case, the bounds are tight.

\begin{corollary}
  If $C$ is a $\s\p\s\p$ circuit which computes $\lang{Det}_n$ such that the fan-in of
  each bottom multiplication gate is $\Omega(\sqrt{n})$ or such that
  the fan-in of each multiplication gate of level $3$ is $O(\sqrt{n})$, 
  then the minimal size of $C$ is $2^{\Theta\left(\sqrt{n} \log(n)\right)}$.
\end{corollary}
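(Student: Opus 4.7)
The plan is to prove the corollary by establishing matching lower and upper bounds of $2^{\Theta(\sqrt{n}\log n)}$ on the minimum size of a $\s\p\s\p$ circuit for $\lang{Det}_n$ satisfying the stated disjunctive hypothesis.

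For the lower bound, I would simply split on which disjunct holds. In the case where every bottom multiplication gate has fan-in at least $c\sqrt{n}$ for some constant $c > 0$, I would apply Theorem~\ref{Thm_lowbound>t} with $t = c\sqrt{n}$ to obtain a size bound of $2^{\Omega(\sqrt{n}\log n)}$. In the case where every level-$3$ multiplication gate has fan-in at most $c\sqrt{n}$, I would apply Lemma~\ref{Lem_lowbound<v} with $v = c\sqrt{n}$, giving $s \geq 2^{\Omega((n/v)\log n)} = 2^{\Omega(\sqrt{n}\log n)}$. Lemma~\ref{Lem_lowbound<v} is stated for $\lang{Det}_n$ as well; the only fact used is $|\mathcal{M}_{\lang{Det}_n}| = n!$, which follows from the Leibniz formula exactly as in the permanent case.

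For the upper bound, I would invoke the classical fact that $\lang{Det}_n$ is computable by an arithmetic circuit of polynomial size (for instance by Berkowitz's algorithm), so $d = n$ and $s = n^{O(1)}$. Applying Theorem~\ref{Thm_main} with these parameters gives $\log(ds) = \Theta(\log n)$, hence $\alpha = \sqrt{d \log n / \log(ds)} = \Theta(\sqrt{n})$ and $\beta = \sqrt{d\log(ds)/\log n} = \Theta(\sqrt{n})$. The resulting homogeneous $\sum\pn{O(\sqrt{n})}\sum\pn{O(\sqrt{n})}$ circuit has size $2^{O(\sqrt{n}\log n)}$, and by construction every level-$3$ multiplication gate has fan-in $O(\alpha) = O(\sqrt{n})$, so this circuit satisfies the second disjunct of the hypothesis. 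Combined with the lower bound, this yields the claimed $2^{\Theta(\sqrt{n}\log n)}$ tight bound.

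The only subtlety I anticipate is verifying that the circuit produced by Theorem~\ref{Thm_main} matches the hypothesis on the nose. The second disjunct ($O(\sqrt{n})$ upper bound on level-$3$ fan-in) is a direct consequence of the construction, so nothing extra is needed there. If one preferred to realize the first disjunct instead, one could pad any bottom product gates of small fan-in by multiplying them by $1$'s to force fan-in exactly $\lceil\sqrt{n}\rceil$; this preserves homogeneity and the polynomial computed, and only multiplies the size by a factor linear in $\sqrt{n}$, which is absorbed into $2^{O(\sqrt{n}\log n)}$. Either way the existential upper bound is delivered, and the main work of the corollary reduces to the routine parameter substitutions described above.
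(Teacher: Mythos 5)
Your proof is correct and mirrors the paper's own argument: for the lower bound you split on the two disjuncts and invoke Theorem~\ref{Thm_lowbound>t} and Lemma~\ref{Lem_lowbound<v} respectively, exactly as the paper does, and for the upper bound you exhibit a depth-four circuit for $\lang{Det}_n$ with level-three fan-in $O(\sqrt{n})$ and size $2^{O(\sqrt{n}\log n)}$, which satisfies the second disjunct. The only cosmetic difference is that you derive the upper bound from Theorem~\ref{Thm_main}, whereas the paper cites Koiran's result~\cite{Koi12} directly (implicitly relying on the polynomial-size arithmetic branching program for the determinant to avoid the $\log d$ loss); both routes give the same asymptotic bound.
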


\begin{proof}
  Koiran's result~\cite{Koi12} implies that there exist depth-four circuits for
  $\lang{Det}_n$ of size $2^{O(\sqrt{n}\log n)}$ such that all
  multiplication gates have fan-in bounded by $O(\sqrt{n})$.
  For the lowerbound, the case where the bottom fan-in is lowerbounded by
  $\Omega(\sqrt{n})$ is given by Theorem~\ref{Thm_lowbound>t}. The case where the fan-in of
  gates of level $3$ is bounded by $O(\sqrt{n})$ is given by Lemma~\ref{Lem_lowbound<v}.
\end{proof}

It would be interesting to know the lower bound on
the size of an homogeneous circuit computing
$\lang{Det}_n$. In~\cite{GKKS12} the authors show that if the circuit
is such that the fan-in of bottom gates is bounded by $O(\sqrt{n})$,
then the size is $2^{\sqrt{n}}$. Here, we show that if all bottom
fan-in are lowerbounded by $\Omega(\sqrt{n})$, then the size is
$2^{\Omega(\sqrt{n}\log n)}$. What happens if in the circuit, there are some
bottom gates with a large fan-in and some bottom gates with a small fan-in?

\begin{question}
  Is it true that if $\mathcal{C}$ is a homogeneous depth-four circuit
  which computes $\lang{Det}_n$ then the size of $\mathcal{C}$ is at
  least $2^{\Omega(\sqrt{n})}$? 
\end{question}

{\small
\section*{Acknowledgments} 

The author thanks Pascal Koiran for helpful discussions and
comments on this work.}

\bibliographystyle{plain}
\bibliography{parallelisation}

 \end{document}